\newtheorem{theorem}{Theorem}
\newtheorem{lemma}[theorem]{Lemma}
\newtheorem{definition}[theorem]{Definition}
\newtheorem{remark}[theorem]{Remark}
\def\NN{\mathbb{N}}
\newcommand{\PAomega}{{\sf PA}^\omega}
\newcommand{\DC}{{\sf DC}}
\newcommand{\LEP}{{\sf LEP}}
\newcommand{\systemT}{{\sf T}}
\newcommand{\WQO}[1]{{\sf WQO}[{#1}]}
\newcommand{\MB}[1]{{\sf MB}[{#1}]}
\newcommand{\Ram}[1]{{\sf MonSeq}[{#1}]}
\newcommand{\EPS}[4]{{\sf EPS}^{#1}_{#4}({#2})({#3})}
\newcommand{\EPSs}{{\sf EPS}}
\newcommand{\pair}[1]{\langle #1 \rangle}
\newcommand{\initSeg}[2]{[{#1}]({#2})}
\newcommand{\ext}[1]{\widehat{#1}}
\newcommand{\bad}{\theta}
\newcommand{\ft}[1]{\widetilde{{#1}}}
\newcommand{\lt}[1]{\bar{{#1}}}
\newcommand{\fts}[1]{\tilde{{#1}}}
\newcommand{\lts}[1]{\bar{{#1}}}
\newcommand{\wq}{{X}}
\newcommand{\wwq}{{X^\ast}}
\newcommand{\e}{{x}}
\newcommand{\es}{x'}
\newcommand{\eseq}{{x}}
\newcommand{\eseqs}{{x}}
\newcommand{\ws}{y}
\newcommand{\wseq}{{u}}
\newcommand{\wseqs}{{v}}
\newcommand{\twseqs}{{\tilde{\wseqs}}}
\newcommand{\alg}{{\Gamma}}
\newcommand{\sbs}{\prec}
\newcommand{\sbw}{\unlhd}
\newcommand{\psbw}{\lhd}
\newcommand{\func}{{g}}
\newcommand{\seqext}{{\bf p}}
\newcommand{\seqfunc}{{\bf f}}
\newcommand{\sft}[2]{{\sf J}_{#2}{#1}}
\newcommand{\is}[1]{[{#1}]}
\newcommand{\nd}[3]{|{#1}|^{#2}_{#3}}
\newcommand{\smin}[1]{\sbw_{#1}}
\newcommand{\psmin}[1]{\psbw_{#1}}
\title{Applying G\"{o}del's Dialectica Interpretation to Obtain a Constructive Proof of Higman's Lemma}
\author{Thomas Powell
\institute{Queen Mary, University of London\\ United Kingdom}
\email{tpowell@eecs.qmul.ac.uk}}
\begin{document}
\maketitle

\begin{abstract}We use G\"{o}del's Dialectica interpretation to analyse Nash-Williams' elegant but non-constructive `minimal bad sequence' proof of Higman's Lemma. The result is a concise constructive proof of the lemma (for arbitrary decidable well-quasi-orders) in which Nash-Williams' combinatorial idea is clearly present, along with an explicit program for finding an embedded pair in sequences of words.\end{abstract}

\section{Introduction}
\label{sec-intro}

We call a preorder $(\wq,\leq_\wq)$ a \textit{well-quasi-order} (WQO) if any infinite sequence $(\eseq_i)$ has the property that $\eseq_i\leq_\wq \eseq_j$ for some $i<j$. The theory of WQOs contains several results which state that certain constructions on WQOs inherit well-quasi-orderedness, the most famous being Kruskal's tree theorem \cite{Kruskal(60)}. A special case of this theorem is Higman's lemma:

\begin{theorem}[Higman, \cite{Higman(52)}]\label{thm-higman}If $(\wq,\leq_\wq)$ is a WQO, then so is the set $(\wwq,\leq_\wwq)$ of words in $\wq$ under the embeddability relation $\leq_\wwq$, where $\pair{\e_0,\ldots,\e_{m-1}}\leq_{\wwq}\pair{\es_0,\ldots,\es_{n-1}}$ iff there is a strictly increasing map $f\colon\is{m}\to\is{n}$ with $\e_i\leq_\wq \es_{fi}$ for all $i<m$.\end{theorem}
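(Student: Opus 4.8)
The plan is to recover Nash-Williams' classical \emph{minimal bad sequence} argument, proving the contrapositive. Call a finite or infinite sequence \emph{bad} if no earlier term embeds into a later one, so that $(\wwq,\leq_\wwq)$ failing to be a WQO means there is an infinite bad sequence of words. The first step is to extract from the class of all such sequences a \emph{minimal} one $(w_i)$: pick $w_0$ a word of least length that begins some infinite bad sequence, then $w_1$ of least length such that $w_0,w_1$ begins one, and so on. This minimisation is the combinatorial heart of the proof and is precisely the step that resists a constructive reading, since it couples dependent choice with a search over the infinitary predicate ``begins some infinite bad sequence''.

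Next I would use that $(\wq,\leq_\wq)$ is a WQO. Each $w_i$ is nonempty, since the empty word embeds into every word and an empty term would make the sequence good; so write $w_i$ as a head $a_i\in\wq$ followed by a tail $v_i$. Since $(\wq,\leq_\wq)$ is a WQO, the sequence of heads has an infinite nondecreasing subsequence $a_{i_0}\leq_\wq a_{i_1}\leq_\wq\cdots$ with $i_0<i_1<\cdots$; this is an instance of the ascending-subsequence property of well-quasi-orders, obtained for example from Ramsey's theorem for pairs.

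I would then form the spliced sequence $w_0,\dots,w_{i_0-1},v_{i_0},v_{i_1},v_{i_2},\dots$ and show it is again bad by ruling out the three possible embeddings. An embedding inside the prefix is impossible because $(w_i)$ is bad; an embedding $w_k\leq_\wwq v_{i_j}$ from a prefix word into a tail lifts to $w_k\leq_\wwq w_{i_j}$, since $v_{i_j}$ embeds into $w_{i_j}$; and an embedding $v_{i_a}\leq_\wwq v_{i_b}$ between two tails lifts, after prepending the dominated heads $a_{i_a}\leq_\wq a_{i_b}$, to $w_{i_a}\leq_\wwq w_{i_b}$. Each case contradicts badness of $(w_i)$. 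But the spliced sequence agrees with $(w_i)$ below index $i_0$ while carrying the strictly shorter word $v_{i_0}$ in position $i_0$, contradicting the minimal choice of $w_{i_0}$; this final contradiction finishes the argument.

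The main obstacle, and presumably the target of the Dialectica analysis, is the minimal bad sequence construction: it is doubly non-constructive, needing choice to build the sequence and quantifying over the existence of infinite bad continuations. I expect the passage to a nondecreasing subsequence of heads to be the secondary difficulty, being itself an application of Ramsey's theorem and hence unavailable constructively in general. A computational proof must therefore replace both the global minimisation and this infinitary extraction by finitary surrogates.
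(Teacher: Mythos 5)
Your proposal is correct and follows essentially the same route as the paper: Nash-Williams' minimal bad sequence argument, the extraction of a monotone subsequence of letters via the WQO property of $\wq$ (Ramsey), and the splicing construction contradicting minimality. The only cosmetic differences are that you minimise by word length and split each word as head letter followed by tail, whereas the paper minimises with respect to the prefix relation $\psbw$ and splits each word as body followed by last letter (a choice tailored to its later formalisation); both variants yield the same contradiction.
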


A short proof of Higman's lemma (and more generally Kruskal's theorem) was given by Nash-Williams \cite{Nash-Williams(63)}, using an elegant but non-constructive combinatorial idea known as the \textit{minimal bad sequence} argument.

Higman's lemma has attracted a great deal of attention in logic and computer science, and has been a focal point of research into computational aspects of classical reasoning used in infinitary combinatorics. The constructive content of Nash-Williams' minimal bad sequence argument has been widely analysed (see for instance \cite{Coquand(91),Veldman(2004)}), and in particular, constructive content has been extracted from the proof using  formal methods such the $A$-translation \cite{Murthy(90)} and inductive definitions \cite{Coquand(93)}. An extensive study of program extraction for Higman's lemma has been carried out by Berger and Seisenberger (see \cite{BS(2005),Seisenberger(2003)}), who improve the aforementioned techniques and implement them in the {\sc Minlog} system.

In this article we give another constructive proof of Higman's lemma based on the minimal bad sequence argument. The novelty of our approach is that we use a technique that has not been applied in this context - G\"{o}del's \emph{Dialectica} interpretation. The combination of the negative translation and the Dialectica interpretation forms an extremely powerful and efficient method for extracting programs from classical proofs - testament to this is its central role in the well-known \emph{proof mining} program (see \cite{Kohlenbach(2008)}).

The formal extraction of computational information from proofs often results in output that is complex, highly syntactic and difficult to understand in \emph{mathematical} terms. However, the use of proof theoretic techniques to analyse the constructive content of classical reasoning is becoming increasingly relevant in mathematics, therefore we believe that it is important to produce case studies in which these techniques are applied in a transparent and intuitive manner.

The goal of this article is not just a new proof of Higman's lemma, but a case study that sheds some light on the functional interpretation of proofs in infinitary combinatorics. Our emphasis here is not on `mining' the proof for quantitative information but to produce a constructive justification of Higman's lemma that can actually be read as a mathematical proof, and in which Nash-Williams' original combinatorial idea is clearly present. In addition, we give a heuristic account of the operational behaviour of the resulting program.

\subsection{Preliminaries}
\label{subsec-introduction-prelim}

We formalise Higman's lemma in the language $\PAomega$ of Peano arithmetic in all finite types (see e.g. \cite{Avigad(98)} for details), although throughout the paper we endeavour to avoid \emph{excessive} formality and make various syntactic shortcuts to keep things as readable as possible. By extending $\PAomega$ with the axiom of \emph{dependent choice}
\begin{equation*}\DC \ \colon \ \forall n,x^X\exists y^X\; A_n(x,y)\to\forall x_0\exists f^{\NN\to X}(f(0)=x_0\wedge \forall n\; A_n(fn,f(n+1)))\end{equation*}
over arbitrary types $X$, one obtains a theory of analysis capable of formalising a large portion of mathematics, including Nash-Williams' minimal bad sequence construction. \\

\noindent\textbf{Notation.} We make use of the following conventions and abbreviations.
\begin{itemize}

\item $0_X$ denotes a canonical element of type $X$.

\item Because we will be confronted with a large number of variables, we often use the convention that when a term of type $X$ is denoted $x$, sequences of terms of the same type will often be denoted in bold type ${\bf x}$.

\item $s\ast\alpha$ represents the concatenation of the finite sequence $s$ and a finite/infinite sequence $\alpha$.

\item We write $s\sbs \alpha$ when the finite sequence $s$ is an initial segment of a finite/infinite sequence $\alpha$.

\item $\initSeg{\alpha}{n}$ is the initial segment of the infinite sequence $\alpha$ of size $n$.

\item We write $a\sbw b$ when a word $a\colon X^\ast$ is an initial segment of $b$ i.e. $|a|\leq |b|$ and $a_i=b_i$ for all $i<|a|$. If $a$ is a \emph{prefix} ($|a|<|b|$) we write $a\psbw b$.

\item Given two sequences of words $\wseq$ and $\wseqs$ we write $\wseq\smin{n}\wseqs:\equiv(\initSeg{\wseq}{n}=\initSeg{\wseqs}{n}\wedge\wseq_n\sbw\wseqs_n)$ and $\wseq\psmin{n}\wseqs:\equiv(\initSeg{\wseq}{n}=\initSeg{\wseqs}{n}\wedge\wseq_n\psbw\wseqs_n)$ - the latter simply states that $u$ is lexicographically less than $v$ at point $n$ with respect to the prefix relation $\psbw$.

\end{itemize}

\subsection{The functional interpretation of proofs in $\PAomega+\DC$}
\label{subsec-introduction-dialectica}

This article assumes familiarity with G\"{o}del's functional interpretation of classical proofs, by which we mean the Dialectica interpretation combined with the negative translation. We do not have space to give details of the interpretation - for this the reader is referred to \cite{Avigad(98)}. However, it is useful to recall a few basic facts.
\begin{itemize}

\item The functional interpretation of $\Sigma_2$ formulas coincides with the well-known \emph{no-counterexample interpretation} of Kreisel, interpreting $A\equiv\exists x\forall y A_0(x,y)$ as a functional $F$ that witnesses $\forall f\exists x A(x,fx)$. Intuitively $F$ justifies $A$ by refuting arbitrary counterexample functions $f$ attempting to disprove $A$.

\item The functional interpretation interprets $\Pi_2$ formulas $\forall x\exists y B(x,y)$ \emph{directly} with a functional $f$ satisfying $\forall x B(x,fx)$, due to the fact that it admits Markov's principle. This means that we can use the interpretation to extract programs from even \emph{classical} proofs of $\Pi_2$ theorems.

\end{itemize}

It was shown by G\"{o}del that $\PAomega$ has a functional interpretation in the system $\systemT$ of higher-type primitive recursive functionals. On the other hand, system $\systemT$ is insufficient to interpret the combination of classical logic and countable choice. For this, one typically assigns a direct realizer to the negative translation of choice, usually some form of backward induction such as the well-known \emph{bar recursion} devised by Spector in \cite{Spector(62)}. In this article dependent choice is interpreted using the more recent \emph{product of selection functions} introduced in \cite{EO(2009)}.

\begin{definition}\label{defn-prod}A \emph{selection function} is any functional of type $\sft{X}{R}:\equiv(X\to R)\to X$, for arbitrary $X$, $R$. Given an indexed family of selection functions $\varepsilon\colon X^\ast\to \sft{X}{R}$ together with functionals $q\colon X^\omega\to R$ and $\varphi\colon X^\omega\to \NN$, the product of selection functions $\EPSs$ is defined by the recursion schema
\begin{equation*}\label{eqn-eps}\EPS{\varphi}{\varepsilon}{q}{s}\stackrel{X^\omega}{:=}\left\{\begin{array}{ll}{\bf 0}_{\wq^\omega} & \mbox{  if $\varphi(\ext{s})<|s|$} \\[2mm] a_s\ast\EPS{\varphi}{\varepsilon}{q_{a_s}}{s\ast a_s} & \mbox{  otherwise}\end{array}\right. \end{equation*}
where $a_s=\varepsilon_s(\lambda x\; . \; q_x(\EPS{\varphi}{\varepsilon}{q_x}{s\ast x}))$, $q_x$ is defined by $q_x(\alpha):=q(x\ast\alpha)$ and $\hat{s}$ is the canonical extension of $s$.\end{definition}

$\EPSs$ is a variant of bar recursion that makes explicit the idea that bar recursion can be viewed as kind of backtracking algorithm analogous to the computation of optimal strategies in games of \emph{unbounded} length. We feel it is good practise to choose it over Spector's original bar recursion because it comes naturally equipped with this game semantics. The idea is to imagine $q\colon X^\omega\to R$ specifying the outcome of a sequential game with moves of type $X$ and outcome of type $R$, the $\varepsilon_s$ as \emph{selection functions} that specify a strategy for round $|s|$ given that $s$ has already been played and $\varphi\colon X^\omega\to \NN$ as a control functional that indicates when the game has terminated. For further details on the $\EPSs$ see \cite{EO(2011A)}. By unwinding Definition \ref{defn-prod} one can prove the following key result.

\begin{theorem}[Main theorem on $\EPSs$, cf. \cite{OP(2012A)}]\label{thm-mainspector}Setting $\alpha{:=}\EPS{\varphi}{\varepsilon}{q}{\pair{}}$ and $p_s:=\lambda x\; . \; q_{s\ast x}(\EPS{\varphi}{\varepsilon}{q_{s\ast x}}{s\ast x})$ solves the following system of equations
\begin{equation}\label{eqn-spector}\begin{aligned}\alpha n &\stackrel{X}{=} \varepsilon_{\initSeg{\alpha}{n}}(p_{\initSeg{\alpha}{n}}) \\
q(\alpha) &\stackrel{R}{=} p_{\initSeg{\alpha}{n}}(\alpha n)\end{aligned} \end{equation}
for all $n\leq\varphi\alpha$.\end{theorem}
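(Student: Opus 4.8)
The plan is to prove the two equations in (\ref{eqn-spector}) by simply unwinding the defining recursion of $\EPSs$ from Definition \ref{defn-prod} along the sequence $\alpha$ itself. The crucial observation is that $\alpha$ is defined as $\EPS{\varphi}{\varepsilon}{q}{\pair{}}$, and that the recursion, provided the control functional does not yet signal termination, prepends a single move $a_s$ and recurses on the extended stem $s\ast a_s$. So the first thing I would establish is a \emph{self-similarity} property: for each $n$ in the relevant range, the tail of the computation starting from the stem $\initSeg{\alpha}{n}$ reproduces exactly the tail of $\alpha$ beyond position $n$. Concretely, I expect to show by induction on $n$ that $\EPS{\varphi}{\varepsilon}{q}{\initSeg{\alpha}{n}}=\langle\alpha n,\alpha(n+1),\ldots\rangle$, i.e.\ the object computed from the length-$n$ initial segment of $\alpha$ is precisely the $n$-shifted tail of $\alpha$.

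Granting this, both equations fall out directly. For the first equation, I would expand $\EPS{\varphi}{\varepsilon}{q}{\initSeg{\alpha}{n}}$ using the second (non-zero) branch of the recursion, which is applicable exactly because $n\leq\varphi\alpha$ guarantees $\varphi(\ext{\initSeg{\alpha}{n}})\geq|\initSeg{\alpha}{n}|=n$ so the stopping condition fails. The head of this output is $a_{\initSeg{\alpha}{n}}=\varepsilon_{\initSeg{\alpha}{n}}(\lambda x\,.\,q_x(\EPS{\varphi}{\varepsilon}{q_x}{\initSeg{\alpha}{n}\ast x}))$, and by the self-similarity property this head equals $\alpha n$. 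It then remains to recognise that the inner $\lambda$-abstraction is precisely $p_{\initSeg{\alpha}{n}}$, which follows from matching the definitions of $p_s$ and of $a_s$ (both involve $q_{s\ast x}$ applied to the corresponding $\EPSs$-term). This yields $\alpha n=\varepsilon_{\initSeg{\alpha}{n}}(p_{\initSeg{\alpha}{n}})$. For the second equation, I would use that $q$ depends only on the sequence values: since $\alpha=\initSeg{\alpha}{n}\ast\langle\alpha n,\alpha(n+1),\ldots\rangle$ and the self-similarity identity rewrites the tail as the appropriate $\EPSs$-term, one can unfold $q(\alpha)$ into the shape $q_{\initSeg{\alpha}{n}\ast\alpha n}(\EPS{\varphi}{\varepsilon}{q_{\initSeg{\alpha}{n}\ast\alpha n}}{\initSeg{\alpha}{n}\ast\alpha n})$, which is exactly $p_{\initSeg{\alpha}{n}}(\alpha n)$ by definition of $p_s$.

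The main obstacle I anticipate is bookkeeping around the stems and the shifted outcome functionals $q_x$, $q_{s\ast x}$. The subtlety is that the recursion threads the \emph{prefixed} outcome functional $q_{a_s}$ through each step, so that the natural inductive statement is not merely about $\EPS{\varphi}{\varepsilon}{q}{\cdot}$ with a fixed $q$ but about the family of $\EPSs$-terms with these successively restricted outcome functionals. I would therefore need to formulate the self-similarity induction carefully enough to carry the relationship $q_{\initSeg{\alpha}{n}}(\text{tail})=q(\alpha)$ through, using repeatedly the definitional identity $q_x(\beta)=q(x\ast\beta)$ and the associativity of concatenation $\initSeg{\alpha}{n}\ast(x\ast\beta)=(\initSeg{\alpha}{n}\ast x)\ast\beta$. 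Once the indexing is set up correctly, each individual step is a routine unfolding, so the real content lies in choosing the right induction hypothesis rather than in any single computation.
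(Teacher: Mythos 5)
The paper does not actually prove Theorem \ref{thm-mainspector}; it defers to \cite{OP(2012A)} with the remark that the result follows ``by unwinding Definition \ref{defn-prod}''. Your overall strategy --- an induction establishing the self-similarity identity $\EPS{\varphi}{\varepsilon}{q_{\initSeg{\alpha}{n}}}{\initSeg{\alpha}{n}}=\pair{\alpha n,\alpha(n+1),\ldots}$ (with the outcome functional relativised, as you correctly note in your final paragraph), followed by reading off both equations from the head of the recursion and the definition of $p_s$ --- is exactly the intended unwinding, and the bookkeeping with $(q_{\initSeg{\alpha}{n}})_x=q_{\initSeg{\alpha}{n}\ast x}$ is handled properly.

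There is, however, one step that is wrong as stated and conceals the only genuinely non-routine point of the proof. You justify taking the second branch of the recursion at stem $\initSeg{\alpha}{n}$ by claiming that $n\leq\varphi\alpha$ ``guarantees'' $\varphi(\ext{\initSeg{\alpha}{n}})\geq n$. This does not follow: $\varphi\alpha$ and $\varphi(\ext{\initSeg{\alpha}{n}})$ are values of $\varphi$ at \emph{different} sequences ($\alpha$ versus the canonical zero-padded extension of its length-$n$ prefix), and $\varphi$ need not be continuous or monotone in any useful sense. The correct argument is by contradiction: let $n_0\leq\varphi\alpha$ be \emph{least} such that the stopping condition $\varphi(\ext{\initSeg{\alpha}{n_0}})<n_0$ holds. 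Then the second branch was taken at every stage $m<n_0$, so $\alpha=\initSeg{\alpha}{n_0}\ast{\bf 0}_{X^\omega}=\ext{\initSeg{\alpha}{n_0}}$ (using that the default value returned in the first branch agrees with the canonical extension), whence $\varphi\alpha=\varphi(\ext{\initSeg{\alpha}{n_0}})<n_0\leq\varphi\alpha$, a contradiction. Without this step your self-similarity induction cannot even get past the case analysis in Definition \ref{defn-prod}, so you should add it explicitly; with it, the rest of your sketch goes through as routine unfolding, as you say.
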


As originally established by Spector, in order to witness the functional interpretation of dependent choice it is sufficient to solve the equations (\ref{eqn-spector}) given $\varepsilon$, $q$ and $\varphi$. Therefore a consequence of Theorem \ref{thm-mainspector} is that $\EPSs$ realizes the functional interpretation of dependent choice. For full details of the interpretation of choice via $\EPSs$ the reader is referred to \cite{OP(2012A)}. In this article however, it is enough to know that $\EPSs$ solves (\ref{eqn-spector}) - in our interpretation of the minimal bad sequence construction an instance of these equations naturally arises and we will solve them directly using $\EPSs$, bypassing the formal interpretation of choice.

The statement that $\wwq$ is a WQO can be written as a $\Pi_2$ sentence. By formalising the classical proof of Higman's lemma in $\PAomega+\DC$, we guarantee in theory that given a realizer for the well-quasi-orderedness of $\wq$ we can extract a direct realizer $\alg\colon (\wwq)^\omega\to\NN$ in $\systemT+\EPSs$ that bounds the search for an embedded pair in an arbitrary sequence of words. We formalise the proof in Sect. \ref{sec-formal} and extract a realizer $\alg$ in Sect. \ref{sec-extract}.

\section{A Classical Proof of Higman's Lemma}
\label{sec-classical}

We begin by presenting Nash-Williams' proof of Higman's lemma. First we need the following simple result.

\begin{lemma}\label{lem-ramsey}In a WQO $(X,\leq_X)$, any sequence $(x_i)$ has an infinite increasing subsequence.\end{lemma}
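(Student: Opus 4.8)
The plan is to prove the lemma by the standard dichotomy on \emph{terminal} indices. Call an index $i$ terminal (for the given sequence $(x_i)$) if there is no $j>i$ with $x_i\leq_X x_j$; that is, $x_i$ admits no later element lying above it. The whole argument turns on the classical case distinction as to whether the set of terminal indices is finite or infinite. This is really the Ramsey-for-pairs argument in mild disguise --- one could equally well $2$-colour each pair $\{i,j\}$ with $i<j$ according to whether $x_i\leq_X x_j$ and invoke the infinite Ramsey theorem --- which is presumably why the result is labelled as it is, but the terminal-index version is self-contained and needs only the bare WQO property.

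First I would dispose of the case where there are infinitely many terminal indices $i_0<i_1<i_2<\cdots$. Passing to the subsequence $(x_{i_k})_k$, whenever $k<l$ the index $i_k$ is terminal and $i_l>i_k$, so by definition $x_{i_k}\not\leq_X x_{i_l}$. Thus $(x_{i_k})_k$ is an infinite sequence in which no earlier term lies below a later one --- a \emph{bad} sequence --- which directly contradicts the assumption that $(X,\leq_X)$ is a WQO. Hence only finitely many indices are terminal, so there is an $N$ beyond which every index is non-terminal. Here I would build the increasing subsequence by iterated minimisation: set $i_0:=N$, and given $i_k\geq N$ (hence non-terminal) let $i_{k+1}$ be the least $j>i_k$ with $x_{i_k}\leq_X x_j$, which exists precisely because $i_k$ is non-terminal. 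Since $i_{k+1}>i_k\geq N$ the new index is again non-terminal, so the recursion never gets stuck and yields $x_{i_0}\leq_X x_{i_1}\leq_X\cdots$; transitivity then gives $x_{i_k}\leq_X x_{i_l}$ for all $k<l$, so this is an increasing subsequence as required.

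The mathematical content here is light, so the only delicate point is the reliance on classical reasoning: the finite/infinite case split on terminal indices is an instance of the infinite pigeonhole principle, and it is exactly this step that carries the non-constructivity, since it cannot be decided effectively whether terminal indices recur forever. By contrast the chain-building in the second case is harmless --- taking least witnesses makes it a simple recursion rather than a genuine appeal to choice. I therefore expect the main obstacle, from the standpoint of the paper's programme, to be not the truth of the lemma but isolating this classical core cleanly, so that its functional interpretation integrates with the minimal bad sequence construction and the $\EPSs$ machinery used downstream.
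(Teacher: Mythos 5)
Your argument is correct, but it is not the paper's route: the paper's entire proof of this lemma is the single sentence that it is an ``easy consequence of Ramsey's theorem'', i.e.\ two-colour each pair $i<j$ according to whether $x_i\leq_X x_j$, take an infinite homogeneous set, and observe that it cannot be homogeneous for $\nleq_X$ since that would be a bad sequence. Your terminal-index dichotomy is genuinely more elementary: it replaces the full infinite Ramsey theorem for pairs by one classical case split on whether the set of terminal indices is infinite, and then lets transitivity of the preorder upgrade the consecutively-linked chain $x_{i_0}\leq_X x_{i_1}\leq_X\cdots$ to a subsequence increasing in the sense required by $\Ram{\wq}$ (the paper's formulation (\ref{defn-ram}) does demand $x_{gi}\leq_X x_{gj}$ for all $i<j$, so this transitivity step is needed and you supply it). One small quibble: your case split is an instance of excluded middle on the $\Sigma^0_2$ statement ``only finitely many terminal indices'' rather than of the infinite pigeonhole principle --- pigeonhole would only yield that one of the two classes is infinite, whereas your Case~2 needs the non-terminal indices to be \emph{cofinite}, which you correctly obtain from the negation of ``infinitely many terminal indices''. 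What the paper's appeal to Ramsey buys is brevity only; it deliberately declines to analyse this lemma computationally, taking $\Ram{\wq}$ as a hypothesis with an assumed realizer $G$ precisely because the general passage from $\WQO{\wq}$ to $\Ram{\wq}$ via Ramsey's theorem would cost full dependent choice. Your version isolates the classical core more sharply --- in the spirit of the paper's programme --- but since the lemma is never functionally interpreted in the paper, either proof serves equally well for what follows.
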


\begin{proof}For general WQOs this is an easy consequence of Ramsey's theorem.\end{proof}

In the following we call a sequence in a preorder $X$ \emph{good} if $x_i\leq_X x_j$ for some $i<j$. A sequence is \emph{bad} if it is not good. $X$ is a WQO if all sequences in $X$ are good.

\begin{proof}[Proof of Theorem \ref{thm-higman} (Nash-Williams, \cite{Nash-Williams(63)})] Suppose for contradiction that $X$ is a WQO, but there exists at least one bad sequence $\wseq$ in $(\wwq)^\omega$. Then among all bad sequences we pick a \textit{minimal} bad sequence as follows:
\begin{enumerate}
\item Choose $\wseqs_0$ to be an element of $\wwq$ with the property that $\wseqs_0$ is the first element of some bad sequence but no prefix of $\wseqs_0$ extends to a bad sequence in this way. Such an element exists by the assumption that we have at least one bad sequence $\wseq$.
\item Given that $\wseqs_0,\ldots,\wseqs_{n-1}$ have been selected, choose $\wseqs_n$ to be an element with the property that $\wseqs_0,\ldots,\wseqs_n$ starts a bad sequence but $\wseqs_0,\ldots,\wseqs_{n-1},\ws$ does not extend to a bad sequence for any prefix $\ws\psbw\wseqs_n$.
\end{enumerate}
By dependent choice we can construct an infinite sequence $(\wseqs_i)$ in this manner. It is easy to see that $(\wseqs_i)$ must itself be bad and therefore in particular each word $\wseqs_i$ must be non-empty, so we can write $\wseqs_i=\twseqs_i\ast\eseqs_i$ where the $\eseqs_i$ form an infinite sequence in $\wq$.

Now by Lemma \ref{lem-ramsey} the sequence $(\eseqs_i)$ has an increasing subsequence $$\eseqs_{i_0}\leq_\wq\eseqs_{i_1}\leq_\wq\ldots.$$ Consider the sequence $$\wseqs_0,\ldots,\wseqs_{i_0-1},\twseqs_{i_0},\twseqs_{i_1},\ldots.$$ This sequence must be bad, else $(\wseqs_i)$ would be good, but $\twseqs_{i_0}$ is a proper initial segment of $\wseqs_{i_0}$, contradicting the minimality of $(\wseqs_i)$ at $i_0$. Therefore there cannot exist an initial bad sequence $u$ in $\wwq$.\end{proof}

\section{Formalising the Classical Proof}
\label{sec-formal}

We now formalise Nash-Williams' proof in $\PAomega+\DC$, so that we are ready to apply the functional interpretation in the next section. Given a preorder $(\wq,\leq_\wq)$ define the predicate $\bad_\wq$ on $\wq^\omega\times\NN$ by
\begin{equation*}\label{defn-bad}\bad_\wq(\eseq,j):\equiv\forall i_0<i_1\leq j(x_{i_0}\nleq_\wq x_{i_1}).\end{equation*}
We define the predicate $\bad_\wwq$ on $(\wwq)^\omega\times\NN$ similarly. We suppress the subscript on $\bad$ when it is clear which type it applies to.

\begin{remark}In this article the intuition is that the underlying WQO $\wq$ consists of elements of type $0$, and that the relation $\leq_\wq$ is decidable. Therefore $\sbw$, $\sbs$, $\smin{n}$ and $\bad$ will all be decidable over both $\wq$ and $\wwq$.\end{remark}

A sequence $\eseq$ is bad is it satisfies the $\Pi_1$ predicate $\forall j\bad(\eseq,j)$. The preorder $X$ is a WQO if the closed $\Pi_2$ predicate $\WQO{\wq}:\equiv\forall \eseq\exists j\neg\bad_\wq(\eseq,j)$ holds, similarly $\wwq$ is a WQO if $\WQO{\wwq}:\equiv\forall\wseq\exists j\neg\bad_\wwq(\wseq,j)$ holds. Higman's lemma can then be formally written as
\begin{equation*}\WQO{\wq}\to\WQO{\wwq}.\end{equation*}
In the proof of Higman's lemma, the hypothesis $\WQO{\wq}$ appears in the form given by Lemma \ref{lem-ramsey}, namely that any sequence in $\wq$ has an infinite monotone subsequence:
\begin{equation}\label{defn-ram}\Ram{\wq} :\equiv \forall\eseq^{\wq^\omega}\exists \func^{\NN\to\NN}\forall k \forall i<j\leq k(\func i<\func j\wedge \eseq_{\func i}\leq_\wq\eseq_{\func j}). \end{equation}
In our interpretation of Nash-Williams' proof we do not analyse the computational content of Lemma \ref{lem-ramsey}, rather we directly interpret
\begin{equation*}\Ram{\wq}\to\WQO{\wwq}.\end{equation*}
There are two reasons for this - the first is that in general the passage from $\WQO{\wq}$ to $\Ram{\wq}$ requires Ramsey's theorem and therefore full dependent choice, so while one could in theory interpret Lemma \ref{lem-ramsey} using bar recursion or the product of selection functions, in this article we wish to focus on the main content of Nash-William's proof, so we omit these details.

The second reason is that in certain interesting cases it is easy to prove $\Ram{\wq}$ directly, without resorting to Ramsey's theorem. For instance, when the underlying alphabet $\wq$ is a finite set, $\Ram{\wq}$ is provable in $\PAomega$ using the infinite pigeonhole principle, and so a realizer for the functional interpretation of $\Ram{\wq}$ can be given in system $\systemT$.

\subsection{The Minimal Bad Sequence Argument}
\label{subsec-formal-mbs}

Our main step in the formalisation of Nash-Williams' proof is the formalisation of his minimal bad sequence argument. The main non-trivial principle of $\PAomega$ we require is the \emph{least element principle} -
\begin{equation*}\label{def-LEP}\LEP \ \colon \ \exists mA(m)\to\exists m'(A(m')\wedge\neg A(m'-1)),\end{equation*}
where in our version we assume that $A$ is monotone in the sense that it satisfies $(i) \ i<j\to (A(i)\to A(j))$ and $(ii) \ \neg A(0)$.

\begin{lemma}[Minimal bad sequence construction]\label{lem-mbsformal}It it provable in $\PAomega+\DC$ that for any sequence of words $u\colon (\wwq)^\omega$, there exists a sequence $\seqext_u\equiv \seqext^0,\seqext^1,\ldots$ of sequences of type $(\wwq)^\omega$ and a sequence $\seqfunc_u\equiv\seqfunc^0,\seqfunc^1,\ldots$ of functions of type $(\wwq)^\omega\to\NN$, which, defining $\seqext^{-1}:=u$, together satisfy the following sentences:
\begin{align}\label{nest}& \forall n(\initSeg{\seqext^{n-1}}{n}=\initSeg{\seqext^n}{n}); \\
&\label{badimp} \forall n,j(\neg\bad(\seqext^n,j)\to\neg\bad(\seqext^{n-1},j));\\
&\label{minimality} \forall n,q^{(\wwq)^\omega}(q\psmin{n}\seqext^n\to\neg\bad(q,\seqfunc^nq)).\end{align}
\end{lemma}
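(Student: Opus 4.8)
The plan is to formalise Nash-Williams' informal minimal bad sequence construction as a single application of dependent choice $\DC$ over the type $X=(\wwq)^\omega\times((\wwq)^\omega\to\NN)$, so that each step of the recursion produces simultaneously the next minimal bad sequence $\seqext^n$ together with the witnessing function $\seqfunc^n$. The key observation is that the informal clause ``$\seqext^n$ starts a bad sequence but no prefix of $\seqext^n_n$ does'' must be reformulated in a way that (a) only refers to data available at stage $n$, and (b) yields exactly the three sentences (\ref{nest}), (\ref{badimp}), (\ref{minimality}). First I would set up, at each stage, the predicate asserting that a candidate sequence agrees with $\seqext^{n-1}$ on its first $n$ entries and extends to a bad sequence; the least element principle $\LEP$ (in its monotone form, with the prefix ordering $\psbw$ on the $n$-th word playing the role of the ordering of natural numbers) is what lets me pick the $\psbw$-least such $n$-th word, thereby extracting the minimality condition (\ref{minimality}).

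The main structural step is to describe the $\DC$ matrix $A_n(\cdot,\cdot)$ precisely. Given the pair $(\seqext^{n-1},\seqfunc^{n-1})$ I would define $A_n$ to assert the existence of a sequence $\seqext^n$ and a function $\seqfunc^n$ such that $\initSeg{\seqext^{n-1}}{n}=\initSeg{\seqext^n}{n}$, such that $\seqext^n$ is bad whenever $\seqext^{n-1}$ is (this is the contrapositive form of (\ref{badimp}), phrased so that it survives negative translation cleanly), and such that the minimality property (\ref{minimality}) holds at index $n$. Verifying that $A_n$ is total, i.e.\ that for every pair in the domain a suitable successor pair exists, is where the content of Nash-Williams' construction lives: I would argue that among all sequences agreeing with $\seqext^{n-1}$ on the first $n$ coordinates and extending to a bad sequence, the least-element principle $\LEP$ applied to the $\psbw$-ordering of the $n$-th coordinate delivers a $\seqext^n$ minimal at $n$, and that the associated $\seqfunc^n$ is read off as the witness to non-badness for any strictly smaller candidate $q\psmin{n}\seqext^n$. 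Feeding the base case $\seqext^{-1}:=u$ into $\DC$ then produces the infinite sequences $\seqext_u$ and $\seqfunc_u$ satisfying all three sentences simultaneously.

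The hard part will be getting the formulation of $A_n$ right so that the three conclusions come out in precisely the stated syntactic shapes and so that the whole construction is genuinely provable in $\PAomega+\DC$ rather than needing stronger choice. In particular the delicate point is the interaction between condition (\ref{badimp}) and the minimality clause: I must ensure that the ``is bad'' requirement on $\seqext^n$ is not circular (it cannot presuppose that $\seqext^{n-1}$ is already known to be bad, since the lemma is stated unconditionally for \emph{arbitrary} $u$), which is exactly why (\ref{badimp}) is phrased as the implication $\neg\bad(\seqext^n,j)\to\neg\bad(\seqext^{n-1},j)$ rather than as an outright badness assertion. I would therefore carry the badness requirement only conditionally through the recursion, establishing at each stage that \emph{if} the predecessor sequence is bad then so is the successor, which transfers badness along the whole chain starting from $u$; this conditional bookkeeping, together with checking that $\LEP$ suffices to select the $\psbw$-least admissible word at each stage, is the step I expect to demand the most care.
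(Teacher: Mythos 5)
Your proposal follows essentially the same route as the paper: at each stage $n$ apply the monotone least element principle to select, among sequences agreeing with $\seqext^{n-1}$ on the first $n$ coordinates and whose $n$-th word is a prefix of $\seqext^{n-1}_n$, one that is minimal at $n$ and transfers badness conditionally, Skolemize the minimality clause to obtain $\seqfunc^n$, and then apply a single instance of $\DC$ over the product type $(\wwq)^\omega\times((\wwq)^\omega\to\NN)$. You have also correctly identified the one genuinely delicate point, namely that badness must be carried as the implication $\bad(\seqext^{n-1},j)\to\bad(\seqext^n,j)$ rather than asserted outright (the paper builds this conditional directly into the predicate minimised by $\LEP$, so that $\seqext^{n-1}$ itself always serves as the default witness when no bad extension exists, and the minimisation is over the length $|r_n|$, which for prefixes of a fixed word is equivalent to your $\psbw$-ordering).
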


This formulation of the minimal bad sequence construction is a little more intricate than that given in Sect. \ref{sec-classical}, in particular our aim is to highlight the computational aspects of the construction. The intuition is that the sequence $\seqext_u$ is \textit{classically} constructed in the following manner:
\begin{enumerate}
\item Given an initial sequence $u$, we choose $\seqext^0$ to be a bad sequence such that $\seqext^0_0\sbw u_0$ but no $y\psbw \seqext^0_0$ extends to a bad sequence. If no prefix of $u_0$ extends to a bad sequence we set $\seqext^0:=u$.
\item Given that we have constructed $\seqext^{n-1}$, we choose $\seqext^n$ to be a bad extension of $\initSeg{\seqext^{n-1}}{n}$ such that $\initSeg{\seqext^{n}}{n}\ast y$ does not extend to a bad sequence for any $y\psbw\seqext^n_n$. If no such bad extension exists, we set $\seqext^{n}:=\seqext^{n-1}$.
\end{enumerate}
If $\seqext_u$ is defined in this way then it clearly satisfies (\ref{nest}), and for each $\seqext^n$ we can produce a (classically constructed) function $\seqfunc^n$ that witnesses the minimality of $\seqext^n$ in the sense of (\ref{minimality}).

We observe that the $\seqext^n$ are not necessarily bad (in fact if $\wq$ is a WQO they never will be), but the point is that $\seqext^n$ only fails to be bad in the event that $\seqext ^{n-1}$ is good, in which case we must have $\seqext^n=\seqext^{n-1}$. This is the intuition behind (\ref{badimp}). Nash-Williams' proof is based on the fact that if $\wq$ is a WQO then by (\ref{minimality}) we can show that there is some $n$ and $j$ such that $\bad(\seqext^n,j)$ fails, and then by induction over (\ref{badimp}) we must have $\neg\bad(u,j)$.

\begin{proof}[Proof of Lemma \ref{lem-mbsformal}] Suppose for the moment that $n$ and $w^{(\wwq)^\omega}$ are fixed. Define the monotone predicate $A(m):\equiv\exists r^{(\wwq)^\omega}\forall i\nd{A_m}{r}{i}$ where
\begin{equation*}\label{defn-a}\nd{A_m}{r}{i}:\equiv r\smin{n} w\wedge |r_n|<m\wedge(\bad(w,i)\to\bad(r,i)).\end{equation*}
It is clear that $A(m)$ is monotone, and that $\forall i\nd{A_{|w_n|+1}}{w}{i}$ holds. Therefore by $\LEP$ there exists some $m'$ such that
\begin{equation}\label{eqn-lep1}\left\{\begin{aligned}& \exists p\forall j\left(p\smin{n} w\wedge |p_n|<m'\wedge\left(\bad(w,j)\to\bad(p,j)\right)\right) \wedge \\ & \forall q\exists k\left(q\smin{n}w\wedge |q_n|<m'-1\to\left(\bad(w,k)\wedge\neg\bad(q,k)\right)\right) \end{aligned}\right . .\end{equation}
Now, observing that if $p\smin{n} w\wedge |p_n|<m'$ then $q\psmin{n} p\to q\smin{n} w\wedge |q_n|<m'-1$ we can prove in $\PAomega$ that (\ref{eqn-lep1}) implies
\begin{equation}\label{eqn-lep2}\exists p\left(\forall j\left(\initSeg{w}{n}=\initSeg{p}{n}\wedge\left(\bad(w,j)\to\bad(p,j)\right)\right)\wedge\forall q\exists k\left(q\psmin{n} p\to\neg\bad(q,k)\right)\right).\end{equation}
Skolemizing (\ref{eqn-lep2}) we have that for arbitrary $n$, $w$, there exists a sequence $p$ and function $f\colon (\wwq)^\omega\to\NN$ satisfying
\begin{equation}\label{eqn-dcprem}\forall j,q\left(\initSeg{w}{n}=\initSeg{p}{n}\wedge\left(\bad(w,j)\to\bad(p,j)\right)\wedge\left(q\psmin{n} p\to\neg\bad(q,fq)\right)\right).\end{equation}
By $\DC$ of type $(\wwq)^\omega\times((\wwq)^\omega\to\NN)$ applied to (\ref{eqn-dcprem}) (only dependent on the sequence part of the previous entry), defining an initial value $\seqext^{-1}:=u$ there exists an infinite sequence of sequences $\seqext_u\equiv\seqext^0,\seqext^1\ldots$ and functions $\seqfunc_u\equiv\seqfunc^0,\seqfunc^1\ldots$ satisfying
\begin{equation}\label{eqn-dcconc}\forall n,j,q (\initSeg{\seqext^{n-1}}{n}=\initSeg{\seqext^n}{n}\wedge\left(\bad(\seqext^{n-1},j)\to\bad(\seqext^n,j)\right) \wedge\left(q\psmin{n} \seqext^n\to\neg\bad(q,\seqfunc^n q)\right) ). \end{equation}

This completes the proof, as (\ref{nest}), (\ref{badimp}) and (\ref{minimality}) clearly follow from (\ref{eqn-dcconc}).\end{proof}

In the following $\MB{\wwq}$ abbreviates the statement that for all $u$ there exists $\seqext_u$ and $\seqfunc_u$ satisfying (\ref{eqn-dcconc}).

\subsection{Completing the Proof}
\label{subsec-formal-completing}

\noindent\textbf{Notation.} Given a non-empty word $x\colon\wwq$ we write $x=\ft{x}\ast\lt{x}$ where $\ft{x}\colon\wwq$ and $\lt{x}\colon\wq$. So that these are well defined for all $x$, we define $\ft{\pair{}}:=\pair{}$ and $\lt{\pair{}}=0_X$.
Given a sequence of $\seqext\colon ((\wwq)^\omega)^\omega$ we define the diagonal sequences $\fts{\seqext}\colon(\wwq)^\omega$ by $\fts{\seqext}_i:=\ft{\seqext^i_i}$ and $\lts{\seqext}\colon\wq^\omega$ by $\lts{\seqext}_i:=\lt{\seqext^i_i}$.

\begin{theorem}\label{thm-higmanformal}It is provable in $\PAomega$ that $\Ram{\wq}\wedge\MB{\wwq}\to\WQO{\wwq}$.\end{theorem}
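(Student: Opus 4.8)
The plan is to formalize Nash-Williams' contradiction argument constructively by combining the minimal bad sequence data $\MB{\wwq}$ with the monotone-subsequence hypothesis $\Ram{\wq}$. Given an arbitrary sequence of words $\wseq$, I first invoke $\MB{\wwq}$ to obtain the sequences $\seqext_\wseq$ and $\seqfunc_\wseq$ satisfying (\ref{eqn-dcconc}). The goal $\WQO{\wwq}$ requires me to produce, for this $\wseq$, some $j$ with $\neg\bad(\wseq,j)$; the strategy is to locate a level $n$ and a witness $j$ with $\neg\bad(\seqext^n,j)$, and then descend via (\ref{badimp}) down to $\neg\bad(\wseq,j)$ using the initial value $\seqext^{-1}=\wseq$.

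Next I would apply $\Ram{\wq}$ to the diagonal tail sequence $\lts{\seqext}$, where $\lts{\seqext}_i=\lt{\seqext^i_i}$, to obtain a function $\func$ giving a monotone increasing subsequence $\lts{\seqext}_{\func 0}\leq_\wq\lts{\seqext}_{\func 1}\leq_\wq\cdots$ of last-letters. Here I must be careful, as in the classical proof, that the relevant words are non-empty so that the decomposition $\seqext^i_i=\fts{\seqext}_i\ast\lts{\seqext}_i$ is genuine; I expect to handle the empty-word case using the conventions $\ft{\pair{}}=\pair{}$, $\lt{\pair{}}=0_X$, noting that an empty word immediately yields a good pair and hence $\neg\bad$. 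Fixing $i_0:=\func 0$, I then form the spliced sequence $\seqext^{i_0}_0,\ldots,\seqext^{i_0}_{i_0-1},\fts{\seqext}_{\func 0},\fts{\seqext}_{\func 1},\ldots$, whose entry at position $i_0$ is the proper initial segment $\fts{\seqext}_{i_0}=\ft{\seqext^{i_0}_{i_0}}$ of $\seqext^{i_0}_{i_0}$. This spliced sequence agrees with $\seqext^{i_0}$ below $i_0$ and is strictly smaller at $i_0$ in the $\psmin{i_0}$ ordering, so by the minimality clause (\ref{minimality}) instantiated at $n=i_0$ with $q$ the spliced sequence, we get $\neg\bad(q,\seqfunc^{i_0}q)$ — i.e. the spliced sequence is \emph{good} up to the bound $\seqfunc^{i_0}q$.

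From the goodness of the spliced sequence I then extract goodness of $\seqext^{i_0}$ itself at a computable bound. A good pair in the spliced sequence is either (a) a pair lying within the common prefix $\initSeg{\seqext^{i_0}}{i_0}$, or within the tail formed from the $\fts{\seqext}_{\func k}$, or (b) a cross pair between the prefix and the tail. In either case the embedding $\fts{\seqext}_{\func a}\leq_\wwq\fts{\seqext}_{\func b}$ combined with $\lts{\seqext}_{\func a}\leq_\wq\lts{\seqext}_{\func b}$ (from the monotone subsequence) lifts to an embedding of the full words $\seqext^{\func a}_{\func a}\leq_\wwq\seqext^{\func b}_{\func b}$, because appending a $\leq_\wq$-larger letter to the end of an embeddable word preserves embeddability. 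Using the nestedness (\ref{nest}), these diagonal entries coincide with entries of $\seqext^{i_0}$ at the appropriate indices, yielding $\neg\bad(\seqext^{i_0},j)$ for an explicit $j$ bounded in terms of $\seqfunc^{i_0}$, $\func$, and $i_0$.

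The main obstacle I anticipate is the bookkeeping in this final lifting step: getting the bound $j$ on the good pair in $\seqext^{i_0}$ to be explicitly computable from the data, and verifying the three cases of where the embedded pair sits relative to the splice point $i_0$ while correctly translating indices back through $\func$ and through (\ref{nest}). Everything here is decidable and quantifier-free modulo the witnesses already supplied by $\Ram{\wq}$ and $\MB{\wwq}$, so no further choice or classical logic is needed — which is exactly why the implication is provable in $\PAomega$ alone, with all the non-constructive content isolated into the two hypotheses. Once $\neg\bad(\seqext^{i_0},j)$ is established, a single induction on (\ref{badimp}) from level $i_0$ down to $-1$ delivers $\neg\bad(\wseq,j)$, completing the proof that $\wseq$ is good and hence that $\WQO{\wwq}$ holds.
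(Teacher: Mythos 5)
Your overall route is the same as the paper's: apply $\Ram{\wq}$ to the diagonal last-letter sequence $\lts{\seqext}$, splice $\initSeg{\seqext^{\func 0}}{\func 0}$ with the truncated diagonal entries $\fts{\seqext}_{\func i}$, invoke the minimality clause (\ref{minimality}) at $n=\func 0$ to get goodness of the spliced sequence, lift the good pair back to the full words using the monotone last letters, and descend to $u$ by induction on (\ref{badimp}). The empty-word caveats you flag are the right ones.

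There is, however, one concrete error in the lifting step: you claim that the diagonal entries $\seqext^{\func a}_{\func a}$ ``coincide with entries of $\seqext^{i_0}$ at the appropriate indices'', and conclude $\neg\bad(\seqext^{i_0},j)$ with $i_0=\func 0$. The nesting property (\ref{nest}) only stabilises position $k$ from level $k$ \emph{onwards}: $\seqext^{\func a}_{\func a}=\seqext^{N}_{\func a}$ holds for $N\geq\func a$, not for $N=i_0<\func a$. In the intended construction $\seqext^{n}$ is a fresh bad extension of $\initSeg{\seqext^{n-1}}{n}$, so its entries beyond position $n-1$ bear no relation to those of earlier rows; thus $\seqext^{i_0}_{\func a}$ need not equal $\seqext^{\func a}_{\func a}$ for $a\geq 1$, and $\neg\bad(\seqext^{i_0},j)$ does not follow. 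The good pair of full diagonal words actually sits inside $\seqext^{N}$ for $N$ at least the \emph{largest} index involved -- the paper takes $N=\func(\seqfunc^{\func 0}\psi-\func 0)+1$, the $+1$ also supplying the extra trailing element needed because $\ft{x}\leq_\wwq\ft{y}$ fails to give $x\leq_\wwq y$ when $|x|=1$ and $|y|=0$. The repair is routine (your closing induction on (\ref{badimp}) is stated for all $n$, so descending from this larger $N$ instead of from $i_0$ changes nothing else), but as written the step asserting goodness of $\seqext^{i_0}$ is false.
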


\begin{proof}Take an arbitrary sequence $u\colon(\wwq)^\omega$. By $\MB{\wwq}$ there exists $\seqext_u$ and $\seqfunc_u$ satisfying (\ref{nest}-\ref{minimality}). We show that one of the $\seqext^i$ must be good, which by (\ref{badimp}) implies that $u$ must also be good.

By $\Ram{\wq}$ applied to $\lts{\seqext}$ there exists a monotone function $g$ such that $\lts{\seqext}_{gi}\leq_\wq\lts{\seqext}_{gj}$ for all $i<j$. Define
\begin{equation*}\label{defn-psi}\psi\stackrel{(\wwq)^\omega}{:=}\initSeg{\seqext^{g0-1}}{g0}\ast (\fts{\seqext}_{gi})_{i\in\NN}\equiv\seqext^{g0-1}_0,\ldots,\seqext^{g0-1}_{g0-1},\fts{\seqext}_{g0},\fts{\seqext}_{g1},\ldots \end{equation*}
Now either $\seqext^{g0}_{g0}$ is empty (and hence $\seqext^{g0}$ is trivially good) or $\fts{\seqext}_{g0}\psbw\seqext^{g0}_{g0}$ and thus $\psi\psmin{g0}\seqext^{g0}$, which by (\ref{minimality}) implies that $\neg\bad(\psi,\seqfunc^{g0}\psi)$ i.e. the sequence
\begin{equation*}\initSeg{\psi}{\seqfunc^{g0}\psi+1}\equiv \seqext^{g0-1}_{g0},\ldots,\seqext^{g0-1}_{g0-1},\fts{\seqext}_{g0},\ft{\seqext}_{g1},\ldots,\fts{\seqext}_{g(\seqfunc^{g0}\psi-g0)}\end{equation*}
has one word contained in a later one. But by construction of $g$ this implies that the sequence
\begin{equation*}\seqext^{g0-1}_{0},\ldots,\seqext^{g0-1}_{g0-1},{\seqext^{g0}_{g0}},\seqext^{g0+1}_{g0-1},\ldots,\seqext^{g(\seqfunc^{g0}\psi-g0)}_{g(\seqfunc^{g0}\psi-g0)},\seqext^{g(\seqfunc^{g0}\psi-g0)+1}_{g(\seqfunc^{g0}\psi-g0)+1} \ \ (\ast)\end{equation*}
has one element contained in a later one (note that $\ft{x}\leq_\wwq\ft{y}\to x\leq_\wwq y$ unless $|x|=1$ and $|y|=0$, which is why we need to add the extra element at the end of $(\ast)$). But by the nesting property $(\ast)$ is just an initial segment of $\seqext^{g(\seqfunc^{g0}\psi-g0)+1}$, which must therefore be good. This completes the proof. \end{proof}

Combining Theorem \ref{thm-higmanformal} with Lemma \ref{lem-mbsformal} we see that $\Ram{\wwq}\to\WQO{\wwq}$ can be formalised in $\PAomega+\DC$. The proof as a whole is illustrated in Fig. \ref{fig-classical}.

\begin{figure}[t]
\begin{center}
{\footnotesize
\begin{prooftree}
\AxiomC{$\LEP$}
\AxiomC{$\DC$}
\doubleLine
\RightLabel{\scriptsize{Lem. \ref{lem-mbsformal}}}
\BinaryInfC{$\MB{\wwq}$}
\AxiomC{}
\doubleLine
\RightLabel{\scriptsize{Thm. \ref{thm-higmanformal}}}
\UnaryInfC{$\Ram{\wq}\wedge\MB{\wwq}\to\WQO{\wwq}$}
\BinaryInfC{$\Ram{\wq}\to\WQO{\wwq}$}
\end{prooftree}
}
\end{center}
\caption{Structure of Nash-Williams' proof.}
\label{fig-classical}
\end{figure}

\subsection{Computational Aspects of Nash-Williams' Proof}
\label{subsec-formal-completing}

Now that we have formalised Nash-Williams', we pause for a moment before the full program extraction to look at the computational hints contained in the classical proof. Assuming a realizer $g$ for $\Ram{\wq}$, given an arbitrary sequence of words $u\colon(\wwq)^\omega$ suppose we construct $\seqext_u$, $\seqfunc_u$ as in Lemma \ref{lem-mbsformal} and the sequence $\psi$ as in the proof of Theorem \ref{thm-higmanformal}.

By inspecting the proof of Theorem \ref{thm-higmanformal}, it is not too difficult to show that there exists $i_0<i_1\leq\phi(u)$ such that $u_{i_0}\leq_\wwq u_{i_1}$, where
\begin{equation*}\phi(u):=g(\seqfunc^{g0}_u\psi)+1.\end{equation*}
To see this, note that we prove that $\neg\bad(\seqext^{g(\seqfunc^{g0}\psi-g0)+1},g(\seqfunc^{g0}\psi-g0)+1)$ and so therefore we also have $\neg\bad(u,g(\seqfunc^{g0}\psi-g0)+1)$ by (\ref{badimp}) and hence $\neg\bad(u,\phi(u))$ since $g$ is monotone.

Now $\phi(u)$ is clearly an \emph{ineffective} bound for Higman's lemma, as it depends on non-constructive objects $g$, $\seqext_u$ and $\seqfunc_u$. However, in order to verify the correctness of $\phi(u)$, we do not need the whole of these objects. Rather
\begin{itemize}

\item $g$ must satisfy (\ref{defn-ram}) up to $k=\seqfunc^{g0}\psi$,
\item $\seqext_u$, $\seqfunc_u$ must satisfy (\ref{nest}-\ref{minimality}) up to $n=\phi(u)$.

\end{itemize}

Therefore, if we have a procedure that will compute \emph{approximations} to these objects up to a finite point parametrised by those objects themselves, we can turn $\phi$ into an \emph{effective} bound for Higman's lemma. This is precisely what the functional interpretation does.

\section{A Constructive Proof of Higman's Lemma}
\label{sec-extract}

We now build our constructive version of Nash-Williams' proof. This section follows closely the structure of Sect. \ref{sec-formal}. Recall that we assume a realizer for the functional interpretation of $\Ram{\wq}$, namely a functional $G\colon \wq^\omega\to((\NN^\NN\to\NN)\to(\NN\to\NN))$ satisfying (cf. (\ref{defn-ram}))
\begin{equation}\label{eqn-ramnd}\forall x^{\wq^\omega},\varphi^{\NN^\NN\to\NN}\forall i<j\leq\varphi(G^x_\varphi)(G^x_\varphi < G^x_\varphi j\wedge x_{G^x_\varphi i}\leq_\wq x_{G^x_\varphi j}).\end{equation}
In general, such a realizer could be obtained from a realizer of $\WQO{\wq}$ by implementing a computational interpretation of Ramsey's theorem - such as the one given in \cite{OP(2011B)} using the product of selection functions. However, when $\wq$ is finite, $G$ can be given directly using the standard interpretation of the infinite pigeonhole principle found in e.g. \cite{Oliva(2006)}.

\subsection{Interpreting the Minimal Bad Sequence Argument}
\label{subsec-extract-mbs}

The central part of our constructive proof is the following, constructive version of Lemma \ref{lem-mbsformal}, which is just a realizer for the functional interpretation of $\MB{\wwq}$. \\

\noindent\textbf{Notation.} Recall (Sect. \ref{subsec-introduction-dialectica}) that we denote the type of a selection function by $\sft{X}{R}:\equiv (X\to R)\to X$. We use the abbreviation $Y\equiv (\wwq)^\omega\times ((\wwq)^\omega\to\NN)$ for the type of our choice sequence. Also, in what follows it will be useful to implicitly write variables $F\colon A\to B\times C$ as pairs $\pair{F_0^{A\to B},F_1^{A\to C}}$ - this slight abuse of types will make our syntax much more intuitive.

\begin{lemma}[Minimal bad sequence construction]\label{lem-mbsextract}For fixed $n$ and $w^{(\wwq)^\omega}$ define the decidable formula $\nd{A^{n,w}_m}{r}{i}$ by
\begin{equation*}\label{defn-a-s}\nd{A^{n,w}_m}{r}{i}:=r\smin{n} w\wedge |r_n|<m\wedge\bad(r,i),\end{equation*}
which is slightly simpler than that used in the proof of Lemma \ref{lem-mbsformal}\footnote{It would have been sufficient, although less direct, to obtain (\ref{eqn-lep2}) in the proof of Lemma \ref{lem-mbsformal} by applying $\LEP$ to this simpler formula. We opt for this variant now to simplify the subsequent constructions, as either version would result in essentially the same program.}. Define the functionals $$\varepsilon_{n,w}=\pair{\varepsilon_{n,w}^0,\varepsilon_{n,w}^1}\colon \sft{Y}{\NN\times (\wwq)^\omega}$$ by
\begin{equation}\label{defn-selection}\pair{\varepsilon_{n,w}^0{\pair{J^{Y\to\NN},Q^{Y\to (\wwq)^\omega}}},\varepsilon_{n,w}^1\pair{J,Q}}\stackrel{Y}{:=}\pair{p_i,f_i}\end{equation}
where $i\leq |w_n|$ is the greatest integer satisfying $\neg\nd{A^{n,w}_{i}}{Q(p_{i},f_{i})}{f_{i}(Q(p_{i},f_{i}))}$ and the finite sequences $p_0,\ldots,p_{|w_n|}$ and $f_0,\ldots,f_{|w_n|}$ are defined recursively by
\begin{equation}\label{defn-fx}\begin{aligned}  f_0 &\stackrel{}{:=} 0_{(\wwq)^\omega\to\NN} \\  f_{i} &:= \lambda q.J(q,f_{i-1}) \\ p_{|w_n|} &\stackrel{}{:=} w \\  p_{i-1} &:= Q(p_i,f_{i}).\end{aligned}\end{equation}
Now, given an arbitrary sequence $u\colon (\wwq)^\omega$, define the family of selection functions $\tilde\varepsilon^u\colon Y^\ast\to \sft{Y}{\NN\times (\wwq)^\omega}$ by
\begin{equation}\label{defn-selections}\tilde\varepsilon^u_{\pair{P,F}}\pair{J,Q}:=\varepsilon_{|\pair{P,F}|,P^{|\pair{P,F}|-1}}\pair{J,Q},\end{equation}
where we define the initial value $P^{-1}:=u$. Now, given counterexample functionals $\Omega,\Phi\colon Y^\omega\to\NN$ and $\Psi\colon Y^\omega\to (\wwq)^\NN$, the sequences
\begin{equation*}\seqext_u,\seqfunc_u \stackrel{Y^\omega}{:=}\EPS{\Omega}{\tilde\varepsilon^u}{\pair{\Phi,\Psi}}{\pair{}}\end{equation*}
satisfy, defining $\seqext^{-1}_u:=u$, the following sentences (cf. (\ref{nest}-\ref{minimality})):
\begin{align}& \label{ndnest}\forall n\leq\Omega_{\seqext,\seqfunc}(\initSeg{\seqext^{n-1}}{n}=\initSeg{\seqext^n}{n}); \\
&\label{badimpnd} \forall n\leq\Omega_{\seqext,\seqfunc}(\neg\bad(\seqext^n,\Phi_{\seqext,\seqfunc})\to\neg\bad(\seqext^{n-1},\Phi_{\seqext,\seqfunc}));\\
&\label{minimalitynd} \forall n\leq\Omega_{\seqext,\seqfunc}(\Psi_{\seqext,\seqfunc}\psmin{n}\seqext^n\to\neg\bad(\Psi_{\seqext,\seqfunc},\seqfunc^n(\Psi_{\seqext,\seqfunc}))).\end{align}\end{lemma}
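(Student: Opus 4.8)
The plan is to read the three sentences (\ref{ndnest}--\ref{minimalitynd}) directly off the fixed-point equations that Theorem \ref{thm-mainspector} guarantees for the product of selection functions, so almost no independent computation is needed. Fixing $n\leq\Omega_{\seqext,\seqfunc}$ and writing $\alpha:=\pair{\seqext_u,\seqfunc_u}=\EPS{\Omega}{\tilde\varepsilon^u}{\pair{\Phi,\Psi}}{\pair{}}$, the main theorem supplies an outcome function which, under the pairing convention, I abbreviate $\pair{J,Q}:=p_{\initSeg{\alpha}{n}}$, and it yields
\begin{equation*}\pair{\seqext^n,\seqfunc^n}=\tilde\varepsilon^u_{\initSeg{\alpha}{n}}(\pair{J,Q})\quad\text{and}\quad\pair{\Phi_{\seqext,\seqfunc},\Psi_{\seqext,\seqfunc}}=\pair{J,Q}(\pair{\seqext^n,\seqfunc^n}).\end{equation*}
Setting $w:=\seqext^{n-1}$ (so $w=u$ when $n=0$), the definition (\ref{defn-selections}) unfolds the first equation to $\pair{\seqext^n,\seqfunc^n}=\varepsilon_{n,w}(\pair{J,Q})$, and then (\ref{defn-selection}) gives $\pair{\seqext^n,\seqfunc^n}=\pair{p_i,f_i}$, where $i\leq|w_n|$ is greatest with $\neg\nd{A^{n,w}_i}{Q(p_i,f_i)}{f_i(Q(p_i,f_i))}$ and $p_i,f_i$ are the local recursion (\ref{defn-fx}); such an $i$ exists since $\nd{A^{n,w}_0}{r}{k}$ fails for every $r,k$.

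The crucial step is to feed this back into the second equation. By (\ref{defn-fx}) we have $Q(p_i,f_i)=p_{i-1}$ and $f_{i+1}=\lambda q.\,J\pair{q,f_i}$, so substituting $\pair{\seqext^n,\seqfunc^n}=\pair{p_i,f_i}$ into $\pair{\Phi_{\seqext,\seqfunc},\Psi_{\seqext,\seqfunc}}=\pair{J,Q}(\pair{p_i,f_i})$ pins the global counterexample values down to local data: $\Phi_{\seqext,\seqfunc}=J\pair{p_i,f_i}=f_{i+1}(p_i)$ and $\Psi_{\seqext,\seqfunc}=Q(p_i,f_i)=p_{i-1}$. Everything else follows from the maximality of $i$. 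I would first record two facts for the chosen $i$: if $i=|w_n|$ then $p_i=w$, while if $i<|w_n|$ then maximality makes $\nd{A^{n,w}_{i+1}}{p_i}{f_{i+1}(p_i)}$ hold (using $Q(p_{i+1},f_{i+1})=p_i$); in both cases $p_i\smin{n}w$ and $|(p_i)_n|\leq i$.

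I would then discharge the three sentences by the case split $i=|w_n|$ versus $i<|w_n|$. For nesting (\ref{ndnest}): in either case $p_i\smin{n}w$ gives $\initSeg{\seqext^n}{n}=\initSeg{w}{n}$ (trivially so when $p_i=w$). For (\ref{badimpnd}): when $i<|w_n|$ the instance $\nd{A^{n,w}_{i+1}}{p_i}{f_{i+1}(p_i)}$ contains $\bad(p_i,f_{i+1}(p_i))=\bad(\seqext^n,\Phi_{\seqext,\seqfunc})$ by the first identity, so the antecedent $\neg\bad(\seqext^n,\Phi_{\seqext,\seqfunc})$ is false; when $i=|w_n|$ we have $\seqext^n=\seqext^{n-1}$ and the implication is trivial. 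For minimality (\ref{minimalitynd}): using $\Psi_{\seqext,\seqfunc}=p_{i-1}$ and $\seqfunc^n=f_i$, the hypothesis $p_{i-1}\psmin{n}p_i$ together with $p_i\smin{n}w$ and $|(p_i)_n|\leq i$ forces $p_{i-1}\smin{n}w$ and $|(p_{i-1})_n|<i$, so the first two conjuncts of $\nd{A^{n,w}_i}{p_{i-1}}{f_i(p_{i-1})}$ hold; since the selecting condition negates this instance, its third conjunct must fail, i.e.\ $\neg\bad(p_{i-1},f_i(p_{i-1}))=\neg\bad(\Psi_{\seqext,\seqfunc},\seqfunc^n(\Psi_{\seqext,\seqfunc}))$.

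The main obstacle is conceptual rather than computational: one must see that the two components of the selection function --- the downward recursion producing $\pair{p_i,f_i}$ and the maximal-$i$ search --- only yield the three properties once the outcome function $\pair{J,Q}$ handed back by $\EPSs$ is forced, through the \emph{second} fixed-point equation, to satisfy $\Phi_{\seqext,\seqfunc}=f_{i+1}(p_i)$ and $\Psi_{\seqext,\seqfunc}=p_{i-1}$. This self-referential identification is precisely where $\EPSs$ performs the role that $\DC$ played in Lemma \ref{lem-mbsformal}, and it is the only genuinely non-routine point. The remainder is bookkeeping: handling the boundary indices ($n=0$ with $\seqext^{-1}=u$, and $i\in\{0,|w_n|\}$) and observing that when $|(p_i)_n|=0$ the premise $\psmin{n}$ is vacuous.
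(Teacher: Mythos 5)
Your proposal is correct and follows essentially the same route as the paper: the paper first verifies that $\varepsilon_{n,w}$ witnesses the no-counterexample interpretation of (\ref{eqn-dcprem}) for arbitrary $J,Q$ via the same case split on $i=|w_n|$ versus $i<|w_n|$ and the same maximality argument, and then specializes $J,Q$ to the outcome functionals supplied by Theorem \ref{thm-mainspector}, whereas you inline the specialization from the start --- a purely organizational difference. Your explicit treatment of the boundary cases ($i=0$, empty $(p_i)_n$) is slightly more careful than the paper's, but the substance is identical.
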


These sequences $\seqext_u$, $\seqfunc_u$ computed via the product of selection functions interpret the instance of $\DC$ used in the minimal bad sequence construction, and witness the no-counterexample interpretation of $\MB{\wwq}$. The functional $\Omega$ determines how large the approximation to the choice sequence is, and $\Phi$, $\Psi$ in some sense calibrate its \emph{depth}.

Our aim in the next section is to pick suitable counterexample functions such that (\ref{minimalitynd}) implies $\neg\bad(\seqext^n,\Phi_{\seqext,\seqfunc})$ for some $n\leq\Omega_{\seqext,\seqfunc}$, then by induction over (\ref{badimpnd}) we have $$\neg\bad(\seqext^n,\Phi_{\seqext,\seqfunc})\to\neg\bad(\seqext^{-1},\Phi_{\seqext,\seqfunc}))\equiv\neg\bad(u,\Phi_{\seqext,\seqfunc}),$$ and we therefore obtain $\exists i_0<i_1\leq\Phi_{\seqext_u,\seqfunc_u}(u_{i_0}\leq_\wwq u_{i_1})$ i.e. a constructive bound for $u$ being good. First we must prove the lemma.

\begin{proof}[Proof of Lemma \ref{lem-mbsextract}] First, we show that $\varepsilon_{n,w}$ witnesses the functional (i.e. no-counterexample) interpretation of (\ref{eqn-dcprem}), in the sense that given counterexample functions $J,Q\colon Y\to\NN\times (\wwq)^\omega$ for $j,q$ we have (suppressing dependencies and writing $\varepsilon^b\stackrel{Y}{=}\varepsilon^b_{n,w}(\pair{J,Q})$)
\begin{equation}\label{eqn-dcpremnd}\initSeg{w}{n}=\initSeg{\varepsilon^0}{n}\wedge (\bad(w,J\varepsilon)\to\bad(\varepsilon^0,J\varepsilon))\wedge (Q\varepsilon\psmin{n} \varepsilon^0\to\neg\bad(Q\varepsilon,\varepsilon^1(Q\varepsilon))).\end{equation}
The following is a constructive version of the proof of Lemma \ref{lem-mbsformal}. Let $0\leq i\leq |w_n|$ be the greatest number such that $\neg\nd{A^{n,w}_{i}}{Q(p_{i},f_{i})}{f_{i}(Q(p_{i},f_{i}))}$, so by definition we have $\pair{\varepsilon^0,\varepsilon^1}={\pair{p_i,f_i}}$. There are two cases. \\

\noindent\emph{Case 1: $i=|w_n|$}. Then we have
\begin{equation*}\neg\nd{A_{|w_n|}}{Q\varepsilon}{\varepsilon^1(Q\varepsilon))}\equiv Q\varepsilon\smin{n} w\wedge |(Q\varepsilon)_n|<|w_n|\to\neg\bad(Q\varepsilon,\varepsilon^1(Q\varepsilon)).\end{equation*}
Therefore, observing that $\varepsilon^0=p_{|w_n|}:=w$ and $(Q\varepsilon)\psmin{n}\varepsilon^0\to (Q\varepsilon)\smin{n} w\wedge |(Q\varepsilon)_n|<|w_n|$, we easily obtain (\ref{eqn-dcpremnd}). \\

\noindent\emph{Case 2: $i<|w_n|$}. By maximality of $i$, $\nd{A_{i+1}}{Q(p_{i+1},f_{i+1})}{f_{i+1}(Q(p_{i+1},f_{i+1}))}$ must be true. Now looking at the defining equations (\ref{defn-fx}), we have $Q(p_{i+1},f_{i+1})=p_i=\varepsilon^0$ and $f_{i+1}(Q(p_{i+1},f_{i+1}))=f_{i+1}(p_i)=J(p_i,f_{i})=J\varepsilon$, therefore the following two formulas are true:
\begin{align}\label{eqn-lepeqa}\nd{A_{i+1}}{\varepsilon^0}{J\varepsilon} &\equiv \varepsilon^0\smin{n} w\wedge |(\varepsilon^0)_n|\leq i\wedge \bad(\varepsilon^0,J\varepsilon); \\ \label{eqn-lepeqb}
\neg\nd{A_i}{Q\varepsilon}{\varepsilon^1(Q\varepsilon)} &\equiv Q\varepsilon\smin{n} w\wedge |(Q\varepsilon)_n|<i\to \neg\bad(Q\varepsilon,\varepsilon^1(Q\varepsilon)). \end{align}
Now by (\ref{eqn-lepeqa}) we have $\initSeg{w}{n}=\initSeg{\varepsilon^0}{n}\wedge (\bad(w,J\varepsilon)\to\bad(\varepsilon^0,J\varepsilon))$, and because $Q\varepsilon\psmin{n}\varepsilon^0\to Q\varepsilon\smin{n} w\wedge |(Q\varepsilon)_n|<i$ by (\ref{eqn-lepeqb}) we obtain $Q\varepsilon\psmin{n}\varepsilon^0\to\neg\bad(Q\varepsilon,\varepsilon^1(Q\varepsilon))$. Therefore (\ref{eqn-dcpremnd}) holds.

Thus we have shown that $\varepsilon_{n,w}$ witnesses (\ref{eqn-dcpremnd}) for arbitrary $n,w,J$ and $Q$. Now setting
\begin{equation}\label{defn-JQ}\begin{aligned}\seqext_u,\seqfunc_u &\stackrel{Y^\omega}{:=} \EPS{\Omega}{\tilde\varepsilon^u}{\pair{\Phi,\Psi}}{\pair{}}
\\ J_n(p,f) &\stackrel{\NN}{:=} \Phi_{\pair{\initSeg{\seqext_u}{n},\initSeg{\seqfunc_u}{n}}\ast\pair{p,f}}(\EPS{\Omega}{\tilde\varepsilon^u}{\pair{\Phi,\Psi}}{\pair{\initSeg{\seqext_u}{n},\initSeg{\seqfunc_u}{n}}\ast\pair{p,f}}) \\ Q_n(p,f) &\stackrel{(\wwq)^\omega}{:=}\Psi_{\pair{\initSeg{\seqext_u}{n},\initSeg{\seqfunc_u}{n}}\ast\pair{p,f}}(\EPS{\Omega}{\tilde\varepsilon^u}{\pair{\Phi,\Psi}}{\pair{\initSeg{\seqext_u}{n},\initSeg{\seqfunc_u}{n}}\ast\pair{p,f}})\end{aligned}\end{equation}
by the main theorem on $\EPSs$ quoted in Sect. \ref{subsec-introduction-dialectica} we satisfy Spector's equations
\begin{equation}\label{eqn-spectors}\begin{aligned}\seqext^n,\seqfunc^n &= \varepsilon^0_{n,\seqext^{n-1}}(J_n,Q_n),\varepsilon^1_{n,\seqext^{n-1}}(J_n,Q_n) \\  J_n(\seqext^n,\seqfunc^n),Q_n(\seqext^n,\seqfunc^n) &= \Phi_{\seqext,\seqfunc},\Psi_{\seqext,\seqfunc}\end{aligned}\end{equation}
for all $n\leq\Omega_{\seqext,\seqfunc}$. By setting $w:=\seqext^{n-1}$, $J:=J_n$ and $Q:=Q_n$ in (\ref{eqn-dcpremnd}) and substituting in (\ref{eqn-spectors}), we obtain equations (\ref{ndnest}-\ref{minimalitynd}).\end{proof}

\subsection{Constructing a Realizer for Higman's Lemma}
\label{subsec-extract-realizer}

\begin{definition}\label{defn-higmanextract}Given a pair of sequences $\seqext\colon((\wwq)^\omega)^\omega$ and $\seqfunc\colon ((\wwq)^\omega\to\NN)^\omega$, let $G_{\seqext,\seqfunc}$ be a realizer for $\Ram{\wq}$ on the sequence $(\lts{\seqext}_i)$ and counterexample function
\begin{equation*}\label{defn-higomega}\varphi_{\seqext,\seqfunc}:=\lambda g\; . \; \seqfunc^{g0}(\initSeg{\seqext^{g0-1}}{g0}\ast(\fts{\seqext}_{gi})).\end{equation*}
Define the functionals $\Omega$, $\Phi$ and $\Psi$ by (suppressing the subscript on $G$, $\varphi$)
\begin{equation*}\label{defn-higctrex}\begin{aligned}\Omega(\seqext,\seqfunc) &:= G(\varphi G)+1, \\
\Phi(\seqext,\seqfunc) &:= G(\varphi G)+1, \\
\Psi(\seqext,\seqfunc) &:= \initSeg{\seqext^{G0-1}}{G0}\ast(\fts{\seqext}_{Gi}).\end{aligned}\end{equation*}
Finally, define $\alg\colon (\wwq)^\omega\to\NN$ by
\begin{equation*}\label{defn-higrealizer}\alg(u):=\Phi(\seqext_u,\seqfunc_u),\end{equation*}
where $\seqext_u,\seqfunc_u:=\EPS{\Omega}{\tilde\varepsilon^u}{\pair{\Phi,\Psi}}{\pair{}}$ with $\tilde\varepsilon^u$ defined as in Lemma \ref{lem-mbsextract}. \end{definition}

The main theorem of this article is the following, constructive analogue of Theorem \ref{thm-higmanformal}.

\begin{theorem}[Higman's lemma, constructive version]\label{thm-higmanextract}Suppose $\wq$ is a WQO. Then for all sequences of words $u\colon\wwq$ over $\wq$ we have
\begin{equation*}\label{eqn-higman}\exists i_0<i_1\leq\alg(u)(u_{i_0}\leq_\wwq u_{i_1})\end{equation*}
where $\Gamma$ is constructed as in Definition \ref{defn-higmanextract}.\end{theorem}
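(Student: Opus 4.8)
The plan is to transcribe the classical proof of Theorem~\ref{thm-higmanformal} essentially line by line, replacing each appeal to an infinitary hypothesis by the corresponding finite approximation supplied by Lemma~\ref{lem-mbsextract} and by the realizer $G$, and then to check that the bounds $\Omega$, $\Phi$, $\Psi$ of Definition~\ref{defn-higmanextract} keep every index within the range on which those approximations are certified correct. I would fix $u\colon(\wwq)^\omega$, set $\seqext_u,\seqfunc_u:=\EPS{\Omega}{\tilde\varepsilon^u}{\pair{\Phi,\Psi}}{\pair{}}$, and invoke Lemma~\ref{lem-mbsextract} to obtain the approximate properties (\ref{ndnest}--\ref{minimalitynd}), each valid for all $n\leq\Omega_{\seqext,\seqfunc}$. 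Abbreviating $G:=G_{\seqext_u,\seqfunc_u}$, $\varphi:=\varphi_{\seqext_u,\seqfunc_u}$ and $\psi:=\Psi_{\seqext_u,\seqfunc_u}=\initSeg{\seqext^{G0-1}}{G0}\ast(\fts{\seqext}_{Gi})$, unfolding Definition~\ref{defn-higmanextract} gives $\varphi G=\seqfunc^{G0}\psi$ and $\Omega_{\seqext,\seqfunc}=\Phi_{\seqext,\seqfunc}=G(\varphi G)+1$; by (\ref{eqn-ramnd}) the function $G$ then enumerates a $\leq_\wq$-increasing subsequence of $\lts{\seqext}$ on the whole range $i<j\leq\seqfunc^{G0}\psi$, which is exactly the finite fragment of $\Ram{\wq}$ isolated in the discussion of the computational aspects of Nash-Williams' proof.

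The combinatorial heart of the argument then runs exactly as in Theorem~\ref{thm-higmanformal}. Either $\seqext^{G0}_{G0}$ is empty, so $\seqext^{G0}$ is trivially good, or $\fts{\seqext}_{G0}\psbw\seqext^{G0}_{G0}$ and hence $\psi\psmin{G0}\seqext^{G0}$. Since $G$ is monotone on $\{0,\ldots,\varphi G\}$ one has $G0\leq G(\varphi G)<\Omega_{\seqext,\seqfunc}$, so I can apply (\ref{minimalitynd}) at $n=G0$ to get $\neg\bad(\psi,\seqfunc^{G0}\psi)$; thus $\initSeg{\psi}{\seqfunc^{G0}\psi+1}$ contains a pair of embedded bodies $\fts{\seqext}_{Ga}\leq_\wwq\fts{\seqext}_{Gb}$. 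Combining this body-embedding with the tail-embedding $\lts{\seqext}_{Ga}\leq_\wq\lts{\seqext}_{Gb}$ delivered by $G$ — and appending the extra word at the end of the sequence to handle the degenerate case $|x|=1$, $|y|=0$, exactly as in the classical proof — produces an embedded pair in the sequence $(\ast)$. By the nesting property (\ref{ndnest}) this $(\ast)$ is an initial segment of $\seqext^N$ for $N:=G(\seqfunc^{G0}\psi-G0)+1$, and monotonicity of $G$ together with $\seqfunc^{G0}\psi-G0\leq\varphi G$ yields $N\leq G(\varphi G)+1=\Omega_{\seqext,\seqfunc}$ with the pair sitting at indices $\leq N\leq\Phi_{\seqext,\seqfunc}$; hence $\neg\bad(\seqext^N,\Phi_{\seqext,\seqfunc})$.

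To finish I would descend back to $u$: since $N\leq\Omega_{\seqext,\seqfunc}$, a finite downward induction on $n$ from $N$ to $0$ through (\ref{badimpnd}) carries $\neg\bad(\seqext^N,\Phi_{\seqext,\seqfunc})$ down to $\neg\bad(\seqext^{-1},\Phi_{\seqext,\seqfunc})\equiv\neg\bad(u,\Phi_{\seqext,\seqfunc})$. Unfolding $\neg\bad$ and recalling $\alg(u)=\Phi(\seqext_u,\seqfunc_u)$ then gives precisely $\exists i_0<i_1\leq\alg(u)(u_{i_0}\leq_\wwq u_{i_1})$.

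I expect the combinatorics itself to cost nothing, being inherited wholesale from Theorem~\ref{thm-higmanformal}; the real work, and the main obstacle, is the arithmetic of the bounds — verifying that every appeal to (\ref{ndnest}--\ref{minimalitynd}) and to the subsequence property of $G$ is made at an index within the range certified by $\Omega$, $\Phi$ and $\varphi$. The two key inequalities $G0\leq\Omega_{\seqext,\seqfunc}$ and $N\leq\Omega_{\seqext,\seqfunc}$ both rest on $G$ being monotone precisely on $\{0,\ldots,\varphi G\}$ together with $\seqfunc^{G0}\psi-G0\leq\varphi G=\seqfunc^{G0}\psi$, and I must also confirm that the index shift by $G$ sends the pair found inside $\initSeg{\psi}{\seqfunc^{G0}\psi+1}$ to indices bounded by $\Phi_{\seqext,\seqfunc}$. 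A further, purely presentational, subtlety is the self-referential appearance of $\seqext_u,\seqfunc_u$ both in the definition of the bounds and as the objects satisfying the bounded properties; this is harmless, since Lemma~\ref{lem-mbsextract} asserts exactly the properties (\ref{ndnest}--\ref{minimalitynd}) with $\Omega$, $\Phi$, $\Psi$ evaluated at that very pair.
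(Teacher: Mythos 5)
Your proposal is correct and follows essentially the same route as the paper's own proof: you apply (\ref{minimalitynd}) at $n=G0$ to the sequence $\Psi_u$, transfer the resulting embedded pair from the bodies back to full words using the tail-embedding supplied by $G$ (with the extra appended element and the empty-word case handled as in Theorem \ref{thm-higmanformal}), locate the pair inside $\seqext^{G(\varphi G-G0)+1}$ via the nesting property, and descend to $u$ by induction over (\ref{badimpnd}), with the same bound checks $G0<\Omega_u$ and $G(\varphi G-G0)+1\leq\Omega_u$. The only cosmetic omission is the paper's explicit remark on the degenerate case $\varphi G<G0$, which does not affect correctness.
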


\begin{proof}Fix $u$. In what follows, $\seqext$, $\seqfunc$ are fixed as $\seqext_u$, $\seqfunc_u$. We use the abbreviation $\Omega_u:=\Omega(\seqext_u,\seqfunc_u)$, and similarly for $\Phi_u$, $\Psi_u$, $G_u$ and $\varphi_u$. We claim that there is some $n\leq\Omega_u$ satisfying $\neg\bad(\seqext^n,\Phi_u)$. Then by induction over (\ref{badimpnd}), we see that $\neg\bad(u,\Phi_u)$, and the theorem follows from the definition of $\bad$. It remains to prove the claim.

First observe that because $G_u$ is a realizer of $\Ram{\wq}$ for $\varphi_u$ we have (cf. \ref{eqn-ramnd})
\begin{equation}\label{eqn-higram}\forall i<j\leq\varphi_u(G_u)(G_ui<G_uj\wedge \lts{\seqext}_{G_ui}\leq_\wq \lts{\seqext}_{G_uj}).\end{equation}
Now, $G0\leq G(\varphi G)$ so we have $G0< G(\varphi G)+1=\Omega_u$, therefore by (\ref{minimalitynd}) it follows that
\begin{equation}\label{eqn-higmin}\Psi_u\psmin{G0}\seqext^{G0}\to\neg\bad(\Psi_u,\seqfunc^{G0}(\Psi_u)).\end{equation}
The premise of (\ref{eqn-higmin}) must hold by construction of $\Psi_u$, since $\initSeg{\seqext^{G0-1}}{G0}=\initSeg{\seqext^{G0}}{n}$ by (\ref{ndnest}) and $\lts{\seqext}_{G0}\psbw \seqext^{G0}_{G0}$ (unless $\seqext^{G0}_{G0}=\pair{}$ in which case we trivially have $\neg\bad(\seqext^{G0},G0+1)$ and hence $\neg\bad(\seqext^{G0},\Phi_u)$). Therefore we have $\neg\bad(\Psi_u,\varphi G)$ since $\seqfunc^{G_u0}(\Psi_u)=\varphi_uG_u$ by definition, i.e. the finite sequence
\begin{equation*}\initSeg{\Psi_u}{\varphi G+1}\equiv\seqext^{G0-1}_0,\seqext^{G0-1}_1,\ldots,\seqext^{G0-1}_{G0-1},\fts{\seqext}_{G0},\ldots,\fts{\seqext}_{G(\varphi G-G0)}\end{equation*}
has one element contained in a later one (we illustrate the case $\varphi G\geq G0$ - if $\varphi G<G0$ then $\initSeg{\seqext^{G0-1}}{G0}$ is bad and hence $\neg\bad(\seqext^{G0-1},\Phi_u)$). Now since $\varphi G-G0\leq\varphi G$, by (\ref{eqn-higram}) we see that the sequence
\begin{equation*}\seqext^{G0-1}_{0},\seqext^{G0-1}_1,\ldots,\seqext^{G0-1}_{G0-1},{\seqext^{G0}_{G0}},\seqext^{G0+1}_{G0+1},\ldots,\seqext^{G(\varphi G-G0)}_{G(\varphi G-G0)},\seqext^{G(\varphi G-G0)+1}_{G(\varphi G-G0)+1} \ \ (\ast)\end{equation*}
has one element contained in a later one (we need to add an extra element for the same reason as we do in the proof of Theorem \ref{thm-higmanformal}). But because $G(\varphi G-G0)+1\leq G(\varphi G)+1=\Omega_u$, by the nesting property (\ref{ndnest}) the sequence $(\ast)$ is just an initial segment of $\seqext^{G(\varphi G-G0)+1}$, and hence $\neg\bad(\seqext^{G(\varphi G-G0)+1},G(\varphi G-G0)+1)$ which implies $\neg\bad(\seqext^{G(\varphi G-G0)+1},\Phi_u)$. This proves the claim, completing the proof.\end{proof}

An rough map of our constructive proof, with partial realizers shown is given as Fig. \ref{fig-constructive}.

\begin{figure}[t]
\begin{center}
{\footnotesize
\begin{prooftree}
\AxiomC{$\tilde\varepsilon\colon\LEP$}
\AxiomC{$\EPSs\colon\DC$}
\doubleLine
\RightLabel{\scriptsize{Lem. \ref{lem-mbsextract}}}
\BinaryInfC{$\EPSs(\tilde\varepsilon)\colon\MB{\wwq}$}
\AxiomC{}
\doubleLine
\RightLabel{\scriptsize{Thm. \ref{thm-higmanextract}}}
\UnaryInfC{$\Ram{\wq}\wedge\MB{\wwq}\to\WQO{\wwq}$}
\BinaryInfC{$\lambda G\; . \; \lambda u\; . \; \Phi^G(\EPS{\Omega}{\tilde\varepsilon}{\pair{\Phi,\Psi}}{\pair{}}\colon\Ram{\wq}\to\WQO{\wwq}$}
\end{prooftree}
}
\end{center}
\caption{Structure of constructive proof.}
\label{fig-constructive}
\end{figure}

\subsection{An Informal Discussion on the Extracted Program $\alg$}
\label{subsec-extract-behaviour}

We conclude the section with an \emph{informal} analysis of our extracted realizer. Often, programs extracted from classical proofs via proof interpretations can be very difficult to understand, sometimes taking up several pages of abstruse higher type syntax or computer code to even state. In contrast, given the logical complexity of Nash-Williams' proof our realizer extracted using the Dialectica interpretation is relatively concise, and we can even describe its operational behaviour to an extent.

\emph{We stress that everything which follows is heuristic and has not been properly formalised.} Our aim is merely to illustrate that it is at least feasible to decipher our realizer on a qualitative level!

Our algorithm uses the product of selection functions $\EPSs$ to interpret the minimal bad sequence argument used in Nash-Williams' proof. As observed in Sect. \ref{subsec-introduction-dialectica}, $\EPSs$ - and consequently our extracted program - comes equipped with a natural game theoretic semantics. For a full account of this the reader is advised to consult \cite{EO(2011A),OP(2012A)}. However, for completeness we state, without further details, the game theoretic reading of the key constructions in our algorithm.
\begin{itemize}

\item The functionals $\Phi,\Psi$ assign to any sequence (i.e. infinite play) $\seqext,\seqfunc$ an \emph{outcome} of type $\NN\times(\wwq)^\omega$.

\item The selection functions $\tilde\varepsilon^u$ - built from the realizer of $\LEP$ - implement a strategy for constructing an \emph{optimal} play $\seqext_u,\seqfunc_u$, the selection function $\tilde\varepsilon^u_{n,\seqext^{n-1}}$ being responsible for constructing the $n$th point $\seqext_u^n,\seqfunc_u^n$ in the sequence given that we have already computed the previous value $\seqext_u^{n-1}$.

\item The selection functions make a decision based on the functionals $J_n,Q_n$ defined in (\ref{defn-JQ}) which (in loose game theoretic terms) describe the optimal outcome of each potential choice at point $n$.

\item The functional $\Omega$ acts as a control, determining the `relevant part' of an infinite play $\seqext,\seqfunc$ thereby telling $\EPSs$ when it has computed a sufficiently long sequence.

\end{itemize}
In terms of Nash-Williams proof, the sequence $\seqext_u,\seqfunc_u$ strategically constructed by $\EPSs$ constitutes an `attempt' at producing a minimal bad sequence from $u$ (given by $\seqext$, with accompanying functionals $\seqfunc^n$ witnessing minimality at point $n$). We define $\Phi$, $\Psi$ and $\Omega$ so that the construction can be essentially reversed to obtain a bound for $u$.

So what can we say about this optimal sequence $\seqext_u,\seqfunc_u$? We prove in Theorem \ref{thm-higmanextract} that there is some element of the approximation $\seqext_u^n$ such that $\neg\bad(\seqext_u^n,\Phi_u)$ holds. It is not too difficult to see, by (\ref{eqn-spectors}), that $\neg\bad(\seqext_u^n,\Phi_u)$ can only hold if $\varepsilon_{n,\seqext_u^{n-1}}$ picks the default value $\seqext_u^n=\seqext_u^{n-1}$. Similarly we have $\seqext_u^{n-1}=\seqext_u^{n-2}$ and so on, so $\EPSs$ just returns the initial value $u$ at each step.

So how does the program justify selecting $u$ at point $n$, given that it has already chosen $u$ at $n-1$? We see that the selection function $\varepsilon_{n,u}$ always sets $\pair{\seqext_u^n,\seqfunc_u^n}=\pair{u,f_{|u_n|}}$ (where the $f_i$ are defined as in (\ref{defn-fx})), unless the outcome $Q_n(u,f_{|u_n|})=\Psi_u$ is lexicographically less than $u$ at point $n$, in which case it must check that $\bad(\Psi_u,f_{|u_n|}(\Psi_u))$ is false. But $f_{|u_n|}(\Psi_u)=J_n(\Psi_u,f_{|u_n|-1})$ by (\ref{defn-fx}) which checks the final outcome of $\EPSs$ given the sequence
\begin{equation*}(u,\seqfunc_u^0),\ldots,(u,\seqfunc_n^{n-1}),(\Psi_u,f_{|u_n|-1}) \ \ (\ast)\end{equation*}
Now in the computation of $\EPSs$ the functionals $\Omega$, $\Phi$, $\Psi$ only ever look at the first $i$ values of $\seqext^{i-1}$. Therefore we propose that because $\initSeg{\Psi_u}{n}=\initSeg{u}{n}$ (and $|u_n|-1=|(\Psi_u)_n|$) we can identify $(\ast)$ with the outcome of $\EPSs$ given the sequence
\begin{equation}(\Psi_u,\seqfunc_{\Psi_u}^0),\ldots,(\Psi_u,\seqfunc_{\Psi_u}^{n-1}),(\Psi_u,f_{|(\Psi_u)_n|})\end{equation}
which by our previous argument can be viewed as the outcome of running our algorithm with initial value $\Psi_u$ instead of $u$. In other words we make the identification $J_n(\Psi_u,f_{|u_n|-1})\sim\Phi_{\Psi_u}=\alg(\Psi_u)$, which explains why we must have $\neg\bad(\Psi_u,J_n(\Psi_u,f_{|u_n|-1}))$.

We claim that the algorithm $\Gamma$ obtained via $\EPSs$ has characteristics of an \emph{open recursion} procedure (see e.g. \cite{Berger(2004)}), computing $\alg(u)$ by internally computing values of $\alg(v)$ for $v$ lexicographically less than $u$. If we take $\seqext_u$ to be the constant sequence with value $u$, then our bound for $u$ is given by $\Gamma(u):=\Phi(\seqext_u,\seqfunc_u)=G(\varphi G)+1$ where now $G$ is a witness for $\Ram{\wq}$ on $\lts{u}$ and counterexample function $\lambda g\; . \; \seqfunc^{g0}(\initSeg{u}{g0}\ast (\fts{u}_{gi}))$. But by our previous argument we can identify $\seqfunc^{g0}(\initSeg{u}{g0}\ast (\fts{u}_{gi}))$ with $\Gamma(\initSeg{u}{g0}\ast (\fts{u}_{gi}))$. Thus it seems that $\Gamma$ is closely related to a functional $\tilde\Gamma$ defined, via {open recursion, by $\tilde\Gamma(u):=G(\varphi G)+1$ where $G$ is a witness for $\Ram{\wq}$ on the counterexample function $$\varphi:=\lambda g\; . \; \tilde\Gamma(\initSeg{u}{g0}\ast (\fts{u}_{gi})).$$

Of course, none of this precise - the identifications above are made very informally - and in particular we anticipate that the way our algorithm treats empty words would be more complex than a straightforward open recursion procedure. However, our purpose here is merely to provide via a casual argument some insight into how $\alg$ works.

It would be interesting to analyse the behaviour of our extracted algorithm in depth, to give a precise explanation of the way in which it computes bounds on bad sequences and compare this algorithm to those extracted using other methods. We leave this as an open problem.

\section{Final Comments}
\label{sec-comments}

We have used G\"{o}del's functional interpretation to produce a constructive version of Nash-Williams' minimal bad sequence proof of Higman's lemma. Our proof is relatively short and concise, and the combinatorial idea behind Nash-William's proof can be clearly seen in ours. Moreover, we can start to make sense of the operational behaviour of the extracted algorithm, at least on an informal level. We hope that this case study provides some insight into program extraction in infinitary combinatorics using the functional interpretation.

An obvious direction of future work is to better understand our realizer and give a more satisfactory description than that given in the previous section! One could potentially refine our realizer so that it is more intuitive and efficient, or alternatively construct a new realizer that directly interprets the functional interpretation of the minimal bad sequence argument and compare how it behaves to the one given here. It would also be instructive to formalise our program extraction in a theorem prover, and actually run the algorithm $\Gamma$ on some concrete WQOs to analyse its behaviour.

We close with the remark that the ideas in this article could be extended to solve the functional interpretation of the \emph{general} minimal bad sequence construction, and thereby extract programs from more complex proofs that use this construction, such as Kruskal's theorem. While our focus in this article was on the qualitative aspects of program extraction, it is natural to ask whether one could obtain useful \emph{quantitative} information from the analysis of proofs in this area of combinatorics. Bounds for Higman's lemma on a finite alphabet have already been produced using more direct methods e.g. \cite{Cichon(98)}, but it would be interesting to see if any useful constructive information could be extracted in the general case or for related theorems, through the formal analysis of proofs. \\

\noindent\textbf{Acknowledgements.} This work was supported by an EPSRC Doctoral Training Grant. The author thanks Paulo Oliva for suggesting this project and for reading an earlier draft of this article, and the anonymous referees for corrections and several useful comments.

\bibliographystyle{eptcs}

\bibliography{dblogic}

\begin{thebibliography}{10}
\providecommand{\bibitemdeclare}[2]{}
\providecommand{\surnamestart}{}
\providecommand{\surnameend}{}
\providecommand{\urlprefix}{Available at }
\providecommand{\url}[1]{\texttt{#1}}
\providecommand{\href}[2]{\texttt{#2}}
\providecommand{\urlalt}[2]{\href{#1}{#2}}
\providecommand{\doi}[1]{doi:\urlalt{http://dx.doi.org/#1}{#1}}
\providecommand{\bibinfo}[2]{#2}

\bibitemdeclare{incollection}{Avigad(98)}
\bibitem{Avigad(98)}
\bibinfo{author}{J.~\surnamestart Avigad\surnameend} \&
  \bibinfo{author}{S.~\surnamestart Feferman\surnameend}
  (\bibinfo{year}{1998}): \emph{\bibinfo{title}{G\"odel's functional
  (``{D}ialectica") interpretation}}.
\newblock In \bibinfo{editor}{S.~R. \surnamestart Buss\surnameend}, editor:
  {\sl \bibinfo{booktitle}{Handbook of proof theory}}, {\sl
  \bibinfo{series}{Studies in Logic and the Foundations of Mathematics}}
  \bibinfo{volume}{137}, \bibinfo{publisher}{North Holland, Amsterdam}, pp.
  \bibinfo{pages}{337--405}.
\doi{10.1016/S0049-237X(98)80020-7}


\bibitemdeclare{inproceedings}{Berger(2004)}
\bibitem{Berger(2004)}
\bibinfo{author}{U.~\surnamestart Berger\surnameend} (\bibinfo{year}{2004}):
  \emph{\bibinfo{title}{A Computational Interpretation of Open Induction}}.
\newblock In \bibinfo{editor}{F.~\surnamestart Titsworth\surnameend}, editor:
  {\sl \bibinfo{booktitle}{Proceedings of the Ninetenth Annual IEEE Symposium
  on Logic in Computer Science}}, \bibinfo{publisher}{IEEE Computer Society},
  pp. \bibinfo{pages}{326--334}.
\doi{10.1109/LICS.2004.1319627}

\bibitemdeclare{incollection}{BS(2005)}
\bibitem{BS(2005)}
\bibinfo{author}{U.~\surnamestart Berger\surnameend} \&
  \bibinfo{author}{M.~\surnamestart Seisenberger\surnameend}
  (\bibinfo{year}{2005}): \emph{\bibinfo{title}{Applications of inductive
  definitions and choice principles to program synthesis}}.
\newblock In: {\sl \bibinfo{booktitle}{From Sets and Types to Topology and
  Analysis Towards Practicable Foundations for Constructive Mathematics}}, {\sl
  \bibinfo{series}{Oxford Logic Guides}}~\bibinfo{volume}{48},
  \bibinfo{publisher}{OUP}, pp. \bibinfo{pages}{137--148}.
\doi{10.1093/acprof:oso/9780198566519.003.0008}

\bibitemdeclare{article}{Cichon(98)}
\bibitem{Cichon(98)}
\bibinfo{author}{E.~\surnamestart Cichon\surnameend} \&
  \bibinfo{author}{E.~\surnamestart Bittar\surnameend} (\bibinfo{year}{1998}):
  \emph{\bibinfo{title}{Ordinal Recursive Bounds for {H}igman's Theorem}}.
\newblock {\sl \bibinfo{journal}{Theoretical Computer Science}}
  \bibinfo{volume}{201}, pp. \bibinfo{pages}{63--84}.
\doi{10.1016/S0304-3975(97)00009-1 }

\bibitemdeclare{inproceedings}{Coquand(91)}
\bibitem{Coquand(91)}
\bibinfo{author}{T.~\surnamestart Coquand\surnameend} (\bibinfo{year}{1991}):
  \emph{\bibinfo{title}{Constructive Topology and Combinatorics}}.
\newblock In: {\sl \bibinfo{booktitle}{Constructivity in Computer Science}},
  {\sl \bibinfo{series}{LNCS}} \bibinfo{volume}{613}, pp.
  \bibinfo{pages}{159--164}.

\bibitemdeclare{unpublished}{Coquand(93)}
\bibitem{Coquand(93)}
\bibinfo{author}{T.~\surnamestart Coquand\surnameend} \&
  \bibinfo{author}{D.~\surnamestart Fridlender\surnameend}
  (\bibinfo{year}{1993}): \emph{\bibinfo{title}{A proof of {H}igman's lemma by
  structural induction}}.
\newblock \bibinfo{note}{Unpublished Manuscript}.
\doi{10.1007/BFb0021089 }

\bibitemdeclare{article}{EO(2009)}
\bibitem{EO(2009)}
\bibinfo{author}{M.~H. \surnamestart Escard{\'o}\surnameend} \&
  \bibinfo{author}{P.~\surnamestart Oliva\surnameend} (\bibinfo{year}{2010}):
  \emph{\bibinfo{title}{Selection Functions, Bar Recursion, and Backward
  Induction}}.
\newblock {\sl \bibinfo{journal}{Mathematical Structures in Computer Science}}
  \bibinfo{volume}{20}(\bibinfo{number}{2}), pp. \bibinfo{pages}{127--168}.
\doi{10.1017/S0960129509990351 }

\bibitemdeclare{article}{EO(2011A)}
\bibitem{EO(2011A)}
\bibinfo{author}{M.~H. \surnamestart Escard{\'o}\surnameend} \&
  \bibinfo{author}{P.~\surnamestart Oliva\surnameend} (\bibinfo{year}{2011}):
  \emph{\bibinfo{title}{Sequential games and optimal strategies}}.
\newblock {\sl \bibinfo{journal}{Royal Society Proceedings A}}
  \bibinfo{volume}{467}, pp. \bibinfo{pages}{1519--1545}.
\doi{10.1098/rspa.2010.0471}

\bibitemdeclare{article}{Higman(52)}
\bibitem{Higman(52)}
\bibinfo{author}{G.~\surnamestart Higman\surnameend} (\bibinfo{year}{1952}):
  \emph{\bibinfo{title}{Ordering by Divisibility in Abstract Algebras}}.
\newblock {\sl \bibinfo{journal}{Proc. London Math. Soc.}} \bibinfo{volume}{2},
  pp. \bibinfo{pages}{326--336}.
\doi{10.1112/plms/s3-2.1.326}

\bibitemdeclare{book}{Kohlenbach(2008)}
\bibitem{Kohlenbach(2008)}
\bibinfo{author}{U.~\surnamestart Kohlenbach\surnameend}
  (\bibinfo{year}{2008}): \emph{\bibinfo{title}{Applied Proof Theory: Proof Interpretations and their Use in Mathematics}}.
\newblock \bibinfo{series}{Monographs in Mathematics},
  \bibinfo{publisher}{Springer}.

\bibitemdeclare{article}{Kruskal(60)}
\bibitem{Kruskal(60)}
\bibinfo{author}{J.B. \surnamestart Kruskal\surnameend} (\bibinfo{year}{1960}):
  \emph{\bibinfo{title}{Well-quasi-ordering, the tree theorem, and
  {V}\'{a}zsonyi's conjecture}}.
\newblock {\sl \bibinfo{journal}{Trans. American Math. Soc.}}
  \bibinfo{volume}{95}, pp. \bibinfo{pages}{210--225}.
\doi{10.1090/S0002-9947-1960-0111704-1 }

\bibitemdeclare{phdthesis}{Murthy(90)}
\bibitem{Murthy(90)}
\bibinfo{author}{C.~\surnamestart Murthy\surnameend} (\bibinfo{year}{1990}):
  \emph{\bibinfo{title}{Extracting Constructive Content from Classical
  Proofs}}.
\newblock Ph.D. thesis, \bibinfo{school}{Cornell University}.

\bibitemdeclare{article}{Nash-Williams(63)}
\bibitem{Nash-Williams(63)}
\bibinfo{author}{C.~St. J.~A. \surnamestart Nash-William\surnameend}
  (\bibinfo{year}{1963}): \emph{\bibinfo{title}{On Well-Quasi-Ordering Finite
  Trees}}.
\newblock {\sl \bibinfo{journal}{Proc. Cambridge Phil. Soc.}}
  \bibinfo{volume}{59}, pp. \bibinfo{pages}{833--835}.
\doi{10.1017/S0305004100003844}

\bibitemdeclare{inproceedings}{Oliva(2006)}
\bibitem{Oliva(2006)}
\bibinfo{author}{P.~\surnamestart Oliva\surnameend} (\bibinfo{year}{2006}):
  \emph{\bibinfo{title}{Understanding and using {S}pector's bar recursive
  interpretation of classical analysis}}.
\newblock In \bibinfo{editor}{A.~\surnamestart Beckmann\surnameend},
  \bibinfo{editor}{U.~\surnamestart Berger\surnameend},
  \bibinfo{editor}{B.~\surnamestart L{\"o}we\surnameend} \&
  \bibinfo{editor}{J.~V. \surnamestart Tucker\surnameend}, editors: {\sl
  \bibinfo{booktitle}{Proceedings of CiE'2006, LNCS 3988}},
  \bibinfo{publisher}{Springer}, pp. \bibinfo{pages}{423--234}.
\doi{10.1007/11780342\_44}

\bibitemdeclare{misc}{OP(2011B)}
\bibitem{OP(2011B)}
\bibinfo{author}{P.~\surnamestart Oliva\surnameend} \&
  \bibinfo{author}{T.~\surnamestart Powell\surnameend} (\bibinfo{year}{2011}):
  \emph{\bibinfo{title}{A Constructive Interpretation of {R}amsey's Theorem via
  the Product of Selection Functions}}.
\newblock \bibinfo{howpublished}{To appear: Math. Struct. in Comp. Science.
  Preprint available at http://arxiv.org/abs/1204.5631}.

\bibitemdeclare{misc}{OP(2012A)}
\bibitem{OP(2012A)}
\bibinfo{author}{P.~\surnamestart Oliva\surnameend} \&
  \bibinfo{author}{T.~\surnamestart Powell\surnameend} (\bibinfo{year}{2012}):
  \emph{\bibinfo{title}{A Game-Theoretic Computational Interpretation of Proofs
  in Classical Analysis}}.
\newblock \bibinfo{howpublished}{Preprint available at
  http://arxiv.org/abs/1204.5244}.

\bibitemdeclare{phdthesis}{Seisenberger(2003)}
\bibitem{Seisenberger(2003)}
\bibinfo{author}{M.~\surnamestart Seisenberger\surnameend}
  (\bibinfo{year}{2003}): \emph{\bibinfo{title}{On the Constructive Content of
  Proofs}}.
\newblock Ph.D. thesis, \bibinfo{school}{Ludwigs-Maximilians-Universit\"{a}t
  M\"{u}nchen}.

\bibitemdeclare{incollection}{Spector(62)}
\bibitem{Spector(62)}
\bibinfo{author}{C.~\surnamestart Spector\surnameend} (\bibinfo{year}{1962}):
  \emph{\bibinfo{title}{Provably recursive functionals of analysis: a
  consistency proof of analysis by an extension of principles in current
  intuitionistic mathematics}}.
\newblock In \bibinfo{editor}{F.~D.~E. \surnamestart Dekker\surnameend},
  editor: {\sl \bibinfo{booktitle}{Recursive Function Theory: Proc. Symposia in
  Pure Mathematics}}, \bibinfo{volume}{5}, \bibinfo{publisher}{American
  Mathematical Society, Providence, Rhode Island}, pp. \bibinfo{pages}{1--27}.

\bibitemdeclare{article}{Veldman(2004)}
\bibitem{Veldman(2004)}
\bibinfo{author}{W.~\surnamestart Veldman\surnameend} (\bibinfo{year}{2004}):
  \emph{\bibinfo{title}{An Intuitionistic Proof of {K}ruskal's Theorem}}.
\newblock {\sl \bibinfo{journal}{Archive for Mathematical Logic}}
  \bibinfo{volume}{43}(\bibinfo{number}{2}), pp. \bibinfo{pages}{215--264}.
\doi{10.1007/s00153-003-0207-x}

\end{thebibliography}

\end{document}